\documentclass[onecolumn]{IEEEtran}
\usepackage[normalem]{ulem}
\usepackage[utf8]{inputenc} %
\usepackage[english]{babel}
\usepackage{mathrsfs}
\usepackage[dvipsnames]{xcolor}
\usepackage{%
algorithm,%
algpseudocode,%
amsfonts,%
amsmath,%
amssymb,%
amsthm,%
array,%
bbm,%
blindtext,%
cite,%
comment,%
graphicx,%
hyperref,%
paralist,%
pgf,%
pgfgantt,%
pgfplots,%
booktabs,%
tabularx,%
textcomp,%
tikz,%
times,%
titlesec,%
url,%
xcolor%
}

\usepackage[mathscr]{eucal}
\usepackage[caption=false,font=footnotesize]{subfig}
\graphicspath{{Figures/}{../Figures/}}

\usetikzlibrary{
arrows,
calc,
positioning
}

\renewcommand{\le}{\leqslant}
\renewcommand{\ge}{\geqslant}

\newcommand{\EQ}[1]{\begin{equation*}#1\end{equation*}}
\newcommand{\EQN}[1]{\begin{equation}#1\end{equation}}
\newcommand{\eq}[1]{\begin{align*}#1\end{align*}}

\newcommand{\meq}[2]{\begin{xalignat*}{#1}#2\end{xalignat*}}

\newcommand{\set}[1]{\left\{#1\right\}}

\newcommand{\SetIn}[1]{\mathbbm{1}_{\set{#1}}}
\newcommand{\abs}[1]{\left\lvert #1\right\rvert}

\newcommand{\twocgs}{\mathrm{2CGS}}
\newcommand{\Bern}{\mathrm{Bern}}
\newcommand{\Bin}{\mathrm{Bin}}
\newcommand{\Pois}{\mathrm{Pois}}
\newcommand{\iid}{\emph{i.i.d.~}}

\newcommand{\TV}{\mathrm{TV}}

\newcommand{\meandeg}{\E D}

\newcommand{\E}{\mathbb{E}}

\newcommand{\N}{\mathbb{N}}
\newcommand{\R}{\mathbb{R}}
\newcommand{\Z}{\mathbb{Z}}

\newcommand{\nbrh}{\mathcal{N}}

\newcommand{\cM}{\mathcal{M}}

\newcommand{\hJ}{\hat{J}}

\newcommand{\hpi}{\hat{\pi}}

\newcommand{\sF}{\mathscr{F}}

\newcommand{\sX}{\mathscr{X}}

\newcommand{\tE}{\tilde{E}}

\newcommand{\tJ}{\tilde{J}}

\newcommand{\tpi}{\tilde{\pi}}

\newcommand{\Ind}[1]{\mathbbm{1}_{#1}}

\newcommand{\red}[1]{{\color{red}#1}}

\newcommand{\nix}[1]{}

\theoremstyle{plain}
\newtheorem{thm}{Theorem}
\newtheorem{cor}[]{Corollary}
\newtheorem{lem}[]{Lemma}

\newtheorem{probl}[]{Problem}

\theoremstyle{definition}
\newtheorem{defn}[]{Definition}

\newtheorem{assum}[]{Assumption}

\theoremstyle{remark}
\newtheorem{rem}{Remark}

\newtheorem{innercustomgeneric}{\customgenericname}
\providecommand{\customgenericname}{}
\newcommand{\newcustomtheorem}[2]{%
\newenvironment{#1}[1]
{%
 \renewcommand\customgenericname{#2}%
 \renewcommand\theinnercustomgeneric{##1}%
 \innercustomgeneric
}
{\endinnercustomgeneric}
}

\newcustomtheorem{conj2}{Conjecture}
\newcustomtheorem{exmp}{Example}

\begin{document}

\title{Bipartite matching under communication constraints}

\author{
Moonmoon Mohanty\IEEEauthorrefmark{1}, \IEEEmembership{Student Member, IEEE},
\and Gautham Bolar\IEEEauthorrefmark{1},
\and Preetam Patil\IEEEauthorrefmark{1},
\and Ayalvadi Ganesh\IEEEauthorrefmark{2},
\and Jean-Francois Chamberland\IEEEauthorrefmark{3}, \IEEEmembership{Senior Member, IEEE},
\and Parimal Parag\IEEEauthorrefmark{1}, \IEEEmembership{Senior Member, IEEE}
\thanks{
Authors\IEEEauthorrefmark{1} are with the Department of Electrical Communication Engineering and the Centre for Networked Intelligence, Indian Institute of Science, Bangalore, Karnataka 560012, India.
Author\IEEEauthorrefmark{2} is with the School of Mathematics, University of Bristol, Bristol BS8 1UG, UK.
Author\IEEEauthorrefmark{3} is with the Electrical and Computer Engineering, Texas A\&M University, College Station, TX 77843, USA.
Email: \{moonmoonm, gauthamb, preetampatil, parimal\}@iisc.ac.in, a.ganesh@bristol.ac.uk, chmbrlnd@tamu.edu.
}
}
\maketitle
\begin{abstract}
In modern data center networks, thousands of hosts contend for shared link capacity; the scale of these systems makes centralized scheduling impractical.
This article models such scheduling as a bipartite matching problem under communication constraints: senders express interest in forming connections, and receivers respond using only locally available information.
A class of single-round probabilistic matching algorithms is proposed, built on two key ideas: degree-biased sampling, where senders use receiver degrees to inform their random selection; and random thinning, where senders report only a random subset of their connections.
Analytical performance guarantees are established for random graph models. 
In sparse regimes, degree-biased sampling yields a higher expected matching size than prior communication-constrained algorithms; in denser settings, a counterintuitive phenomenon emerges where deliberately restricting available connections through thinning increases the expected number of matches.
Combining thinning to degree two with greedy selection produces an algorithm that requires no parameter tuning and, in packet-level simulations with production traffic traces, significantly extends the network stability region.
Although motivated by data center network scheduling, the underlying framework of bipartite matching under local information constraints is portable to other resource allocation settings.
\end{abstract}

\section{Introduction}

Resource allocation and scheduling in large-scale computing clusters increasingly depend on distributed decision-making.
In modern data center networks (DCNs), thousands of hosts equipped with graphics processing units (GPUs) exchange data across high-bandwidth, multi-path fabrics, yet the sheer scale of these systems makes centralized coordination impractical.
This mismatch between the need for efficient resource assignment and the constraints on information exchange motivates the study of matching-based allocation under limited communication.

Within DCNs, traffic is heterogeneous: short coordination flows demand minimal latency, while massively parallel workloads, such as distributed training and large-scale data processing, generate bulk transfers that require sustained throughput \cite{dctcp_sigcomm2010, d2tcp_vamanan_sigcomm2012, das_tcp_comsnets2013, fastpass_sigcomm14}.
At scale, these competing demands strain receiver-side resources, particularly when many senders target the same host.
Traditional reactive congestion control, which relies on delay or packet-drop feedback, cannot resolve such contention efficiently~\cite{ousterhout2023itstimereplacetcp}.

As device density continues to grow, centralized solutions that were once practical are becoming untenable, driving a shift toward distributed resource-allocation frameworks where local decisions must be made with only partial information.
Motivated by these trends, we consider settings in which resource allocation can be abstracted as a bipartite matching problem under communication constraints.

The literature on bipartite matching is extensive, featuring classic approaches such as the Hungarian algorithm for assignment problems and the Hopcroft--Karp algorithm for maximum matchings \cite{kuhn1955hungarian, ford1957simple, munkres1957algorithms, hopcroft1973n}.
These algorithms require complete knowledge of the graph; in particular, the augmenting-path computations central to Hopcroft--Karp cannot be carried out with only local information.
When full knowledge is unavailable, different procedures must be adopted, often at the expense of achieving a maximum matching.

While several threads of related work address aspects of distributed matching, from randomized load balancing to multi-round graph algorithms to online assignment, the specific regime of single-round structural matching under communication constraints has received comparatively little attention.
Existing work on distributed matching often assumes agents hold preference orderings or value functions over potential partners \cite{dobzinski2014economic,anshelevich2019tradeoffs,assadi2021auction}, whereas in DCN scheduling the problem is purely structural: edges represent feasible sender--receiver pairs, and the objective is to find a large matching using only local information exchange.
This article aims to bridge this gap by introducing a class of single-round bipartite matching algorithms with performance guarantees for random graph models representative of DCN traffic.

Specifically, nodes possess only local knowledge of the graph and exchange minimal coordination messages.
Senders declare their intentions to connect, receivers respond with local degree information, and senders then select a receiver through a probabilistic sampling process biased by the degrees of adjacent receivers.
This degree-biased approach yields a higher mean number of matches than existing communication-constrained algorithms, particularly in sparse-traffic regimes \cite{pim_anderson1993high,dcpim_sigcomm2022}.

In denser settings, an interesting phenomenon arises.
With centralized knowledge, additional edges in the connectivity graph promote a larger maximum matching.
However, when only local information is available, random thinning, where senders report only a random subset of their connections, can actually increase the expected number of matches.
We investigate this phenomenon through parameter optimization and empirical study, with potential applications to dense DCN workloads such as distributed training \cite{wang2024railonlyllm}, web search \cite{dean2008mapreduce,dctcp_sigcomm2010}, and large-scale data processing \cite{vl2_sigcomm2009,shvachko2010hadoop, zaharia2016apache}.

Multiple-round matching protocols can increase the expected matching size \cite{pim_anderson1993high, islip_tnet1999, dcpim_sigcomm2022}, but each additional round incurs control-message overhead that burdens latency-sensitive short flows and, in some designs, introduces nontrivial matching delays.
In practice, real-time systems cannot afford such overheads, making fast, single-round algorithms far more attractive for deployment.

\subsection{Related Work}
\label{sec:literature}

The present work draws on three threads of research: randomized load balancing through restricted choices, distributed matching under locality constraints, and matching-based scheduling in data center networks (DCNs).
We survey each in turn and explain how they inform the algorithms proposed in this article.

A foundational insight in resource allocation is the \emph{power of two choices}: when placing $n$ balls into $n$ bins, selecting the less loaded of two randomly chosen bins reduces the maximum load from $\Theta(\log n / \log\log n)$ to $\Theta(\log\log n)$, an exponential improvement \cite{azar1999balanced, mitzenmacher2001power}.
This result demonstrates that even minimal local information can dramatically improve the outcome of a random allocation process.
Our degree-biased sampling mechanism is motivated by the same principle: senders use the degrees of adjacent receivers to bias their random selection, analogous to choosing the less loaded bin.
Similarly, random thinning, in which senders report only a random subset of their connections, can be viewed as a deliberate restriction of the choice set to facilitate rapid decentralized decisions.
There is, however, a fundamental structural difference: the balls-into-bins framework is inherently sequential and persistent (balls arrive one at a time and bins accumulate load), whereas our setting requires a one-shot combinatorial matching in which all senders act simultaneously and each receiver can be matched to at most one sender.
This distinction means that the quantitative gains from limited choices do not transfer directly, but the qualitative lesson that a small amount of local information substantially outperforms purely random allocation remains central to our approach.

As noted above, classical matching algorithms require complete knowledge of the graph.
When only local communication is available, a natural question is what can be achieved in a bounded number of synchronous rounds, where each node exchanges messages only with its immediate neighbors.
In this setting, Israeli and Itai \cite{israeli1986fast} show that a maximal matching can be computed in $O(\log n)$ randomized rounds, and Linial \cite{linial1992locality} establishes that even basic symmetry-breaking tasks require $\Omega(\log^* n)$ rounds.
Computing a maximum (as opposed to maximal) matching is harder still, as augmenting paths may span the entire graph.
These results delineate a fundamental tension between the locality of information and the quality of the matching.
In the online setting, Karp, Vazirani, and Vazirani \cite{karp1990optimal} establish a tight competitive ratio of $1 - 1/e$ for bipartite matching when one side arrives sequentially and decisions are irrevocable.
Although our model is not online in the adversarial sense, it shares the theme of irrevocable commitment under incomplete information: each sender must choose a single receiver based on local data, without knowledge of other senders' decisions.
Our single-round constraint is more restrictive than what these models typically consider, making performance guarantees in this regime particularly meaningful.

Turning to DCN transport protocols, in \cite{pfabric} Alizadeh et al.\ propose pFabric, which provides near-optimal message/flow\footnote{`Flow' and `message' have been used interchangeably in the literature to refer to application payload. We use the term message to indicate the payload comprising one or more network packets.} completion time (FCT), but requires specialized hardware for embedding scheduling policies within the network fabric.
A centralized time-slot allocation and path assignment algorithm to achieve a \emph{zero-queue} network is proposed in Fastpass \cite{fastpass_sigcomm14}, but centralized schedulers do not scale well, and the matching delay becomes significant compared to the transmission time for short flows.
To address these shortcomings, recent distributed protocols prioritize the transmission of short flows in a manner that circumvents the connection setup delay and its associated round trip time (RTT), whereas receiver-driven flow control is used for sustained connections \cite{phost_conext15,ndp_sigcomm17,homa_sigcomm2018}.
Receiver-driven flow control works well for parallel computations over DCNs because the compute node can rapidly detect and react to local congestion \cite{ousterhout2023itstimereplacetcp}.

The pHost distributed protocol by Gao et al.\ \cite{phost_conext15} was designed to address the shortcomings of Fastpass.
It seeks to minimize FCT in data centers without specialized hardware requirements, assigning high priority to short flow transmissions and lower priority to long flows that require grants from the receiver.
Extending this approach, Montazeri et al.\ introduce Homa \cite{homa_sigcomm2018}, which improves upon pHost by leveraging multiple priority classes (supported by switch priority queues) for short as well as long flows; Homa attains high network utilization through controlled overcommitment.
Underlying these schemes is a tension between the need for fast feedback through control packets and the congestion produced by larger payloads.

The scheduling in many such protocols reduces to matching problems in bipartite graphs, bringing the locality constraints discussed above into direct contact with practical system design.
There is a rich body of research on matching algorithms for network resource allocation \cite{pim_anderson1993high, islip_tnet1999, irrm_el1993, shah2006optimal, mckeown1999achieving, vaze_infocom17_8057223, lowcomplex-tnet2022}.
Parallel iterative matching (PIM) \cite{pim_anderson1993high}, a commonly used algorithm in switch fabrics, proceeds in fixed time slots of matching and transmission phases.
PIM converges to a maximum matching in approximately $\log(n)$ rounds of matching request--responses in an $n$-port switch.
However, such an approach is impractical for DCNs with higher RTTs and large $n$ (number of sender--receiver pairs).
Round-robin schemes such as iSLIP \cite{islip_tnet1999} avoid the randomness of PIM but require knowledge of the system size $n$ and assume that every input has traffic for every output; when the connectivity graph is sparse or irregular, as is typical in DCN workloads, these assumptions break down.
Data center parallel iterative matching (dcPIM) \cite{dcpim_sigcomm2022} adapts PIM to DCNs.
For sparse traffic, dcPIM achieves a high probability of matching within a fixed number of rounds, but the overhead of multiple matching rounds per transmission phase is significant.
In practice, dcPIM seeks to circumvent this overhead through pipelining, wherein the matching phase for the next transmission round happens in parallel with the current transmission phase.

\subsection{Contributions}

This article frames resource allocation as bipartite matching under communication constraints, an abstraction that is portable across application domains including load balancing, network routing, server farms with data locality, and virtual machine migration.
The main contributions are as follows.
\begin{itemize}
\item \emph{Single-round matching algorithms under communication constraints.}
A class of probabilistic bipartite matching algorithms is proposed in which senders disclose their interest in forming connections, receivers share local degree information, and senders then make a single randomized selection.
The entire protocol completes in one round of message exchange.
\item \emph{Random thinning for dense graphs.}
A preemptive thinning step is introduced in which senders report only a random subset of their connections.
For dense connectivity graphs, restricting the choice set in this way increases the expected matching size, even though fewer edges are available to the algorithm.
\item \emph{Degree-biased sampling.}
Senders bias their random selection toward receivers with smaller degrees, producing a larger expected number of matches than uniform selection.
This mechanism is shown to be particularly effective in sparse-traffic regimes.
\item \emph{Robustness to degree distribution.}
The mean matching fraction is shown to be insensitive to the sender out-degree distribution, depending only on the probability that the out-degree is nonzero.
This property makes the algorithms robust under heterogeneous traffic conditions.
\end{itemize}
Analytical performance guarantees are established for random bipartite graph models, including asymptotic results in the Poisson degree regime.
Numerical comparisons against baseline algorithms and established benchmarks complement the theoretical analysis.

\noindent\textbf{Notation.}
We denote the set of 
natural numbers by $\N$, and the non-negative integers by $\Z_+$, and the cardinality of $\sX$ by $\abs{\sX}$. 

\section{System Model}
\label{sec:systemmodel}

We consider a data center network (DCN) with $N$ hosts interconnected through a Clos-like fat-tree topology~\cite{alfares2008scalable}, which provides multiple equal-length paths between any sender--receiver pair.
Each host connects to the network through a Top-of-the-Rack (ToR) switch via an edge link.
We refer to the collection of inter-switch links as the core network.
Since a host can act as both a sender and a receiver simultaneously, the network is modeled as an $N \times N$ system comprising $N$ senders and $N$ receivers.
We assume that the core network has sufficient capacity to support full cross-section bandwidth and that packets can be sprayed over multiple parallel paths for load balancing, as in prior work \cite{homa_sigcomm2018, dcpim_sigcomm2022}.
Congestion is therefore negligible in the core, and the physical network provides full connectivity: every sender can reach every receiver.
The bottleneck lies at the edge links: both the sender's uplink and the receiver's downlink have finite capacity, so each can sustain at most one bulk transfer at a time.

Application traffic in modern DCNs comprises a large number of latency-sensitive \emph{short} messages (control traffic, remote procedure calls) and a smaller number of bandwidth-intensive \emph{long} messages.
Since short and long messages share the same links, long messages can delay short messages through head-of-line blocking in the absence of prioritization.
Following prior proposals \cite{homa_sigcomm2018, dcpim_sigcomm2022}, switch priority classes provide preferential treatment to short messages.
Each ToR switch maintains separate virtual buffers per priority class, and short messages are served before long messages.
We therefore focus on the scheduling of long messages.

Although the physical network is fully connected, not every sender has data destined for every receiver at any given time.
The sender--receiver pairs with outstanding long messages define a \emph{feasible graph}, which is typically a sparse subgraph of the complete bipartite graph.
Because each edge link is dedicated to a single partner during each transmission phase, coordinating sender--receiver assignments reduces to finding a matching in this feasible graph.

\subsection{Mathematical abstraction}
\label{subsec:static}

The system comprises a set of senders $U$ and a set of receivers $V$, both of equal cardinality $N= \abs{U}= \abs{V}$.
An edge $(u,v)$ indicates that sender $u$ has at least one outstanding long message for receiver $v$; the resulting bipartite graph is $G=(U,V,E)$.
We write $u\sim v$, and call $u$ and $v$ neighbors, if $(u,v) \in E$.
The neighborhood of node $x$ is $\nbrh_x \triangleq \set{ y: x \sim y}$, and its degree is $\deg(x) \triangleq \abs{\nbrh_x}$.
We make the following assumption about this graph.
\begin{assum}
\label{assum:DRG}
The bipartite graph $G=(U,V,E)$ is a $D$-out random graph, for a random variable $D$ with a given distribution on $\set{0,1,2,\ldots, N}$.
This random graph is defined as follows. First, nodes $u\in U$ sample degrees $D_u$ that are \iid with the same distribution as $D$.
Then, conditional on $D_u$, node $u$ samples its neighbors in $V$ independently and uniformly at random, and independently of the choices of other nodes.
\end{assum}

An example of a $D$-out random graph is the bipartite Erd\H{o}s-R\'enyi random graph $\mathcal{G}(N,N,p)$, which is obtained by taking $D$ to have the binomial distribution, $\Bin(N,p)$, with parameters $N$ and $p$.
The bipartite graph models only long-message connectivity; the effect of short messages on pair availability can be incorporated by thinning the out-degree distribution, but this extension is not treated explicitly here.

Since each edge link serves at most one partner at a time, any feasible set of communicating sender--receiver pairs corresponds to a matching in $G$.
The main focus of this paper is on finding matchings of large cardinality in order to reduce latency in message transmission.

%

\subsection{Problem formulation}
\label{subsec:problem}

The goal is to find a matching of large cardinality in $G$ using only local information.
Specifically, each sender $u \in U$ knows only its two-hop neighborhood in $G$, i.e., all nodes and edges within graph distance 2.
Based on this knowledge, the sender $u$ selects a single neighbor $v \in \nbrh_u$ to which it sends a grant.
Each receiver $v$ that has received at least one grant accepts exactly one, based on no knowledge of $G$ beyond the received grants.
This grant-and-accept protocol, whose communication stages are detailed in Section~\ref{sec:matching}, establishes a matching.
Finding a matching of maximum cardinality under these constraints is intractable in general, motivating the probabilistic approach developed in the following section.

\section{Bipartite matching algorithms}
\label{sec:matching}
Recall from Section~\ref{sec:systemmodel} that the feasible graph $G=(U,V,E)$ captures all sender--receiver pairs with outstanding long messages, and that a matching in this graph determines the set of concurrent transmissions.
Although this graph is typically sparse, notifying all feasible receivers can still generate significant control traffic in large systems; the problem is especially acute for dense workloads.
To reduce communication and computational cost, we propose that each sender establish links with only a small random subset of its feasible receivers; we call these its \emph{intended receivers} and the resulting subgraph the \emph{intention graph}.

Our goal is to jointly optimize the choice of intended receivers and the matching on the intention graph.
Our proposed algorithms build upon existing matching algorithms for DCN transport protocols~\cite{dcpim_sigcomm2022, pim_anderson1993high}, which we review below.

\subsection{Existing iterative matching algorithms}
The prototypical communication-constrained algorithm, depicted in Fig. \ref{fig:01}, comprises the following stages. 
\begin{itemize}
\item Stage $0$ (NOTIFY): Every sender sends a notification control message to all receivers with which it seeks to connect.
\item Stage $1$ (REQuest): Each receiver sends a request control message (REQ) to all senders from which it received a notification.
\item Stage $2$ (GRANT): Each sender selects one of the received REQs uniformly at random and sends a grant control message (GRANT) to the corresponding receiver.
\item Stage $3$ (ACCEPT): Each receiver selects one of the received GRANTs uniformly at random and transmits an accept control message (ACCEPT) to the corresponding sender. Both the sender and the receiver mark themselves as matched.
\end{itemize}
The above matching algorithm takes approximately two round-trip times (RTTs), assuming instantaneous control message processing and synchronized nodes~\cite{dcpim_sigcomm2022}.
To hide this overhead, the transmission stage of one phase can be pipelined with the matching stage of the next, as the control messages are much smaller than data packets~\cite{dcpim_sigcomm2022}.

\begin{figure}[htb]
\centering
\resizebox{0.9\linewidth}{!}{\begin{tikzpicture}
[node distance=0mm, draw=black, thick, >=stealth',
leaf/.style={rectangle, rounded corners=.5mm, minimum height=6mm, minimum width=6mm, draw=blue!40, fill=blue!20},
labl/.style={rectangle, minimum height=6mm, minimum width=6mm, draw=none, fill=none},
pre/.style={<-},
post/.style={->},
]

\def\hts{1,2,3,4,5,6};
\foreach \y in \hts {
        \node[leaf] (leafNotL\y) at (0,6-\y) {\y};
}

\foreach \y in \hts {
        \node[leaf] (leafNotR\y) at (3,6-\y) {\y};
}

\foreach \y in \hts {
	\node[leaf] (leafReqL\y) at (5,6-\y) {\y};
}
\foreach \y in \hts {
	\node[leaf] (leafReqR\y) at (8,6-\y) {\y};
}

\foreach \y in \hts {
	\node[leaf] (leafGraL\y) at (10,6-\y) {\y};
}
\foreach \y in \hts {
	\node[leaf] (leafGraR\y) at (13,6-\y) {\y};
}

\foreach \y in \hts {
	\node[leaf] (leafAccL\y) at (15,6-\y) {\y};
}
\foreach \y in \hts {
	\node[leaf] (leafAccR\y) at (18,6-\y) {\y};
}

\draw(leafNotR1.west) 
	edge[pre] (leafNotL1.east)
	edge[pre] (leafNotL2.east)
        edge[pre] (leafNotL4.east);
\draw(leafNotR2.west) 
	edge[pre] (leafNotL1.east)
	edge[pre] (leafNotL3.east)
	edge[pre] (leafNotL4.east)
        edge[pre] (leafNotL6.east);
\draw(leafNotR3.west) 
	edge[pre] (leafNotL2.east)
	edge[pre] (leafNotL3.east);
\draw(leafNotR4.west) 
	edge[pre] (leafNotL3.east)
	edge[pre] (leafNotL5.east);
\draw(leafNotR5.west) 
	edge[pre] (leafNotL5.east);
\draw(leafNotR6.west) 
	edge[pre] (leafNotL6.east);
 
\draw(leafReqL1.east) 
	edge[pre] (leafReqR1.west)
	edge[pre] (leafReqR2.west);
\draw(leafReqL2.east) 
	edge[pre] (leafReqR1.west)
	edge[pre] (leafReqR3.west);
\draw(leafReqL3.east) 
	edge[pre] (leafReqR2.west)
	edge[pre] (leafReqR3.west)
	edge[pre] (leafReqR4.west);
\draw(leafReqL4.east) 
	edge[pre] (leafReqR1.west)
	edge[pre] (leafReqR2.west);
\draw(leafReqL5.east) 
	edge[pre] (leafReqR4.west)
	edge[pre] (leafReqR5.west);
\draw(leafReqL6.east) 
	edge[pre] (leafReqR2.west)
	edge[pre] (leafReqR6.west);

\draw(leafGraR1.west) 
	edge[pre] (leafGraL1.east)
	edge[pre] (leafGraL2.east);
\draw(leafGraR2.west) 
	edge[pre] (leafGraL3.east)
	edge[pre] (leafGraL4.east)
	edge[pre] (leafGraL6.east);
\draw(leafGraR4.west) 
	edge[pre] (leafGraL5.east);

\draw(leafAccL2.east) 
	edge[pre, blue!60] (leafAccR1.west);
\draw(leafAccL4.east) 
	edge[pre, blue!60] (leafAccR2.west);
\draw(leafAccL5.east) 
	edge[pre, blue!60] (leafAccR4.west);

\node[labl] at (1.5, 5.3){\bf NOTIFY};
\node[labl] at (6.5, 5.3){\bf REQ};
\node[labl] at (11.5, 5.3){\bf GRANT};
\node[labl] at (16.5, 5.3){\bf ACCEPT};

\node[labl] at (1.5, -1){\Large{\bf Notify stage}};
\node[labl] at (6.5, -1){\Large{\bf Request stage}};
\node[labl] at (11.5, -1){\Large{\bf Grant stage}};
\node[labl] at (16.5, -1){\Large{\bf Accept stage}};
\end{tikzpicture}}
\caption{Existing matching algorithm on a $6 \times 6$ system. Senders notify all feasible receivers; uniform random selection at the GRANT stage causes collisions at receivers~$1$ and~$2$, yielding only three matched pairs.}
\label{fig:01}
\end{figure}
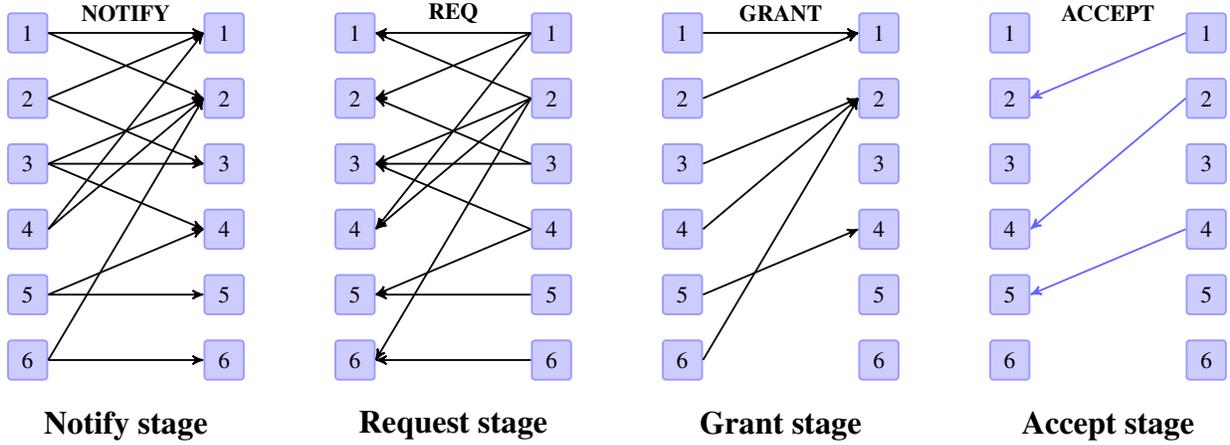

One drawback of the above algorithm is that it can generate excessive control messages because senders notify all feasible receivers.
This can be avoided by having senders only transmit notifications to subsets of their feasible receivers. We incorporate this into our proposed algorithm by sparsifying the communication graph before implementing matching.

\subsection{Degree-biased algorithms}
We introduce a one-parameter family of \emph{degree-biased} (DB) matching algorithms which share much of their architecture with the pre-existing matching framework described above, but incorporate the following innovations.
In the NOTIFY stage, senders only notify a subset of their neighbors, chosen independently without coordination. Thus, this step is decentralized and yields a random subgraph of the initial communication graph.
In the REQ stage, receivers include their degree in the intention graph (the number of senders from which they have received notifications) in their reply. This small overhead enables significant gains.
In the GRANT stage, every sender selects a receiver at random according to a probability distribution that considers their degrees, biasing its choice in favor of low-degree receivers. Empirical results show that this strategy leads to significant performance gains.

Figures \ref{fig:01} and \ref{fig:02} offer an illustrative example with six hosts to highlight the key differences between the degree-biased (Fig. \ref{fig:02}) and pre-existing (Fig.~\ref{fig:01}) algorithms.
In both figures, the nodes on the left correspond to senders and the nodes on the right to receivers.
At the onset of the process, in the pre-existing algorithm depicted in Fig.~\ref{fig:01}, every sender notifies all receivers with which it can feasibly connect.
But in our proposed algorithm, if a sender has an excessive number of notifications to send (node 3 in our example in  Fig. \ref{fig:02}), it subsamples from them and only sends the selected notifications.
This thinning process takes place in every resource allocation round.
The receivers acknowledge notifications by sending requests (REQs) to senders.
In our proposed algorithm, they include the additional information of their degrees (number of sender neighbors) in these REQ messages.
In the GRANT stage, each sender selects one receiver from among its graph neighbors.
This selection is made uniformly at random in existing protocols, but favors low-degree neighbors in our algorithm. We have depicted this by showing senders $\set{5, 6}$ connect to receivers $\set{5, 6}$ under the DB$(\alpha)$ rule in Fig. \ref{fig:02}, but connect to $\set{4, 2}$ in Fig. \ref{fig:01}.

To fully describe our degree-biased algorithm, we need to specify how senders prune their neighborhoods at the NOTIFY stage, and how they choose which receivers to select at the GRANT stage.
At the NOTIFY stage, senders use random thinning to achieve a desired degree distribution.
Each sender $u\in U$ independently samples an out-degree $D_u$ from a specified distribution, then selects $\min(D_u, \abs{\nbrh_u})$ of its feasible receivers uniformly at random.
The resulting intention graph is a $D$-out random bipartite graph as defined in Assumption \ref{assum:DRG}.

\begin{defn}[Thinning]
\label{defn:Thinning}
We refer to the subsampling procedure at the NOTIFY stage as \emph{thinning}: each sender $u \in U$ independently samples an out-degree $D_u$ from a specified distribution and retains $\min(D_u, \abs{\nbrh_u})$ of its feasible receivers, chosen uniformly at random.
\end{defn}

To inform the random selection process at the GRANT stage, we introduce a suitable class of discrete probability laws, as detailed below.
\begin{figure}[htb]
\centering
\resizebox{0.9\linewidth}{!}{\begin{tikzpicture}
[node distance=0mm, draw=black, thick, >=stealth',
leaf/.style={rectangle, rounded corners=.5mm, minimum height=6mm, minimum width=6mm, draw=blue!40, fill=blue!20},
labl/.style={rectangle, minimum height=6mm, minimum width=6mm, draw=none, fill=none},
pre/.style={<-},
post/.style={->},
]

\def\hts{1,2,3,4,5,6};
\foreach \y in \hts {
        \node[leaf] (leafNotL\y) at (0,6-\y) {\y};
}

\foreach \y in \hts {
        \node[leaf] (leafNotR\y) at (3,6-\y) {\y};
}

\foreach \y in \hts {
	\node[leaf] (leafReqL\y) at (5,6-\y) {\y};
}
\foreach \y in \hts {
	\node[leaf] (leafReqR\y) at (8,6-\y) {\y};
}

\foreach \y in \hts {
	\node[leaf] (leafGraL\y) at (10,6-\y) {\y};
}
\foreach \y in \hts {
	\node[leaf] (leafGraR\y) at (13,6-\y) {\y};
}

\foreach \y in \hts {
	\node[leaf] (leafAccL\y) at (15,6-\y) {\y};
}
\foreach \y in \hts {
	\node[leaf] (leafAccR\y) at (18,6-\y) {\y};
}

\draw(leafNotR1.west) 
	edge[pre] (leafNotL1.east)
	edge[pre] (leafNotL2.east)
        edge[pre] (leafNotL4.east);
\draw(leafNotR2.west) 
	edge[pre] (leafNotL1.east)
	edge[dashed, <-, red!60] (leafNotL3.east)
	edge[pre] (leafNotL4.east)
        edge[pre] (leafNotL6.east);
\draw(leafNotR3.west) 
	edge[pre] (leafNotL2.east)
	edge[pre] (leafNotL3.east);
\draw(leafNotR4.west) 
	edge[pre] (leafNotL3.east)
	edge[pre] (leafNotL5.east);
\draw(leafNotR5.west) 
	edge[pre] (leafNotL5.east);
\draw(leafNotR6.west) 
	edge[pre] (leafNotL6.east);
 
\draw(leafReqL1.east) 
	edge[pre] (leafReqR1.west)
	edge[pre] (leafReqR2.west);
\draw(leafReqL2.east) 
	edge[pre] (leafReqR1.west)
	edge[pre] (leafReqR3.west);
\draw(leafReqL3.east) 
	edge[pre] (leafReqR3.west)
	edge[pre] (leafReqR4.west);
\draw(leafReqL4.east) 
	edge[pre] (leafReqR1.west)
	edge[pre] (leafReqR2.west);
\draw(leafReqL5.east) 
	edge[pre] (leafReqR4.west)
	edge[pre] (leafReqR5.west);
\draw(leafReqL6.east) 
	edge[pre] (leafReqR2.west)
	edge[pre] (leafReqR6.west);

\draw(leafGraR1.west) 
	edge[pre] (leafGraL1.east);
\draw(leafGraR2.west)
        edge[pre] (leafGraL4.east);
\draw(leafGraR3.west)
        edge[pre] (leafGraL2.east);
\draw(leafGraR4.west)
	edge[pre] (leafGraL3.east);
\draw(leafGraR5.west)
        edge[pre] (leafGraL5.east);
\draw(leafGraR6.west)
        edge[pre] (leafGraL6.east);

\draw(leafAccL1.east) 
	edge[pre, blue!60] (leafAccR1.west);
\draw(leafAccL2.east)
        edge[pre, blue!60] (leafAccR3.west);
\draw(leafAccL3.east)
        edge[pre, blue!60] (leafAccR4.west);
\draw(leafAccL4.east)
        edge[pre, blue!60] (leafAccR2.west);
\draw(leafAccL5.east) 
	edge[pre, blue!60] (leafAccR5.west);
\draw(leafAccL6.east)
        edge[pre, blue!60] (leafAccR6.west);

\node[labl] at (1.5, 5.3){\bf NOTIFY};
\node[labl] at (6.5, 5.3){\bf REQ};
\node[labl] at (11.5, 5.3){\bf GRANT};
\node[labl] at (16.5, 5.3){\bf ACCEPT};

\node[labl] at (1.5, -1){\Large{\bf Notify stage}};
\node[labl] at (6.5, -1){\Large{\bf Request stage}};
\node[labl] at (11.5, -1){\Large{\bf Grant stage}};
\node[labl] at (16.5, -1){\Large{\bf Accept stage}};
\end{tikzpicture}}
\caption{Degree-biased matching on the same $6 \times 6$ system. Sender~$3$ thins one edge (dashed) at the NOTIFY stage; degree-biased selection at the GRANT stage spreads grants across receivers, yielding six matched pairs.}
\label{fig:02}
\end{figure}
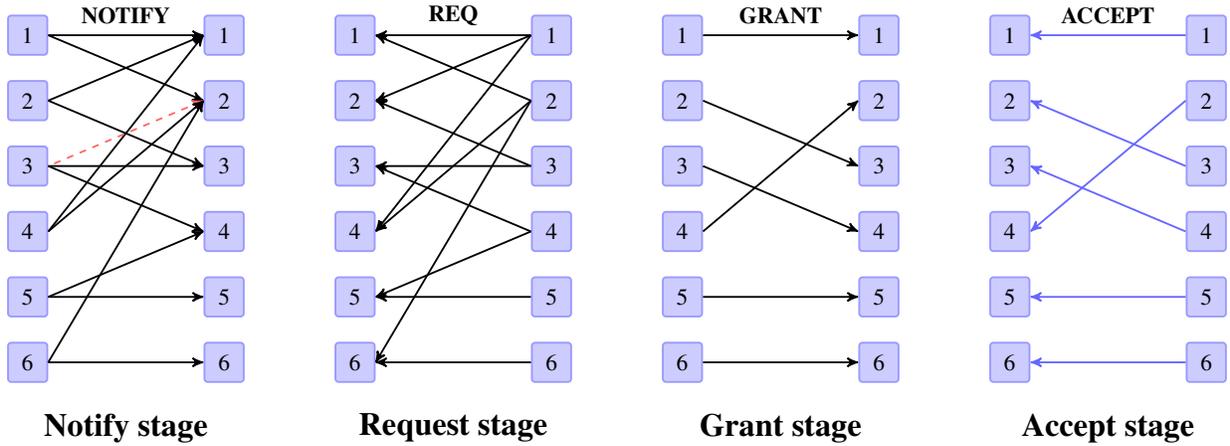

\begin{defn}[DB$(\alpha)$ selection]
\label{defn:alphaprob}
Consider a bipartite graph $G = (U,V,E)$ with senders $U$, receivers $V$, edges $E \subseteq U\times V$,
and a parameter $\alpha \in \R \cup \set{-\infty}$. The DB$(\alpha)$ algorithm defines a probability mass function (PMF) on $\nbrh_u$ for each $u\in U$ with non-zero degree:
\begin{equation}
\label{eqn:DBalpha}
\gamma_{uv}(\alpha) \triangleq \SetIn{u \sim v}\frac{\deg(v)^\alpha}{\sum_{w \in \nbrh_u}\deg(w)^\alpha}.
\end{equation}
\end{defn}
We note that the computation of this PMF at node $u\in U$ relies on knowledge of $\nbrh_u$, the neighbors of sender $u$, and on $\set{\deg(v): v \in \nbrh_u}$, the size of the neighborhood of each receiver $v$ in the neighborhood of sender $u$.
This local information is gathered in the NOTIFY and REQ stages, as described above.

A natural question is to find the exponent $\alpha^\ast$ in the one-parameter family DB$(\alpha)$ that maximizes the mean matching size, averaged over the ensemble of $D$-out random graphs.
Formally, we can pose this problem as follows.
\begin{probl}
\label{prob:OptExp}
Let $D$ be a random variable with a specified distribution. Let $L_N(\alpha)$ be the cardinality of the matching in a bipartite random $D$-out graph on $N$ nodes induced by the
DB$(\alpha)$ algorithm (or equivalently, by the PMFs $(\gamma_{u}(\alpha): u \in U, \deg(u) \neq 0)$).
Find the exponent $\alpha^\ast \in \R$ that maximizes the mean fraction of matched receivers
as the number of receivers $N$ tends to infinity, i.e.
\EQN{
\label{eqn:prob}
\alpha^\ast \triangleq \operatorname*{arg\,max}\set{\liminf_{N\to\infty}\frac1N\E L_N(\alpha): \alpha \in \R}.
}
\end{probl}
A key step in solving this problem is evaluating the limiting mean matching fraction, $\lim_{N\to\infty}\frac1N\E L_N(\alpha)$, for all $\alpha \in \R$; as part of this effort, we show the existence of this limit in certain cases.

Observe that the DB$(\alpha)$ algorithm favors receivers with higher degree when $\alpha >0$ and with lower degree when $\alpha < 0$. As selecting receivers with higher degrees leads to poor matching performance, we focus solely on the case $\alpha \le 0$.
In the next section, we compute the limiting mean matching fraction for the extreme cases, $\alpha=0$ and $\alpha=-\infty$, and augment this with numerical evaluations for intermediate values of $\alpha$.

\section{Performance analysis of DB(\texorpdfstring{$\alpha$}{alpha})}
\label{sec:Analysis}

In this section, we derive expressions for the mean matching size under the DB$(\alpha)$ rule for selection of receivers, obtaining closed-form expressions when $\alpha$ is equal to zero or negative infinity.
Note that $\alpha=0$ corresponds to selecting uniformly at random among receivers neighboring the sender, while $\alpha=-\infty$ corresponds to restricting the selection to neighboring receivers with the smallest degree.
For these two cases, the DB$(\alpha)$ rule is equivalent to selecting a receiver uniformly at random from the admissible set.

\begin{assum}[Intention graph] \label{assum:DRG-IG}
The intention graph $G=(U,V,E)$ is a $D$-out random bipartite graph, i.e., the sender degrees $(D_u, u\in U)$ are \iid copies of $\min\set{D, N}$, where $D$ is a specified random variable on $\Z_+$ and $N=\abs{U}=\abs{V}$.
Moreover, conditional on $D_u$, the neighbors of $u$ are chosen independently and uniformly at random from $V$, independent of the choices of other senders.
\end{assum}

\begin{rem}
\label{rem:EdgeProb}
By Assumption~\ref{assum:DRG-IG}, the conditional expectation $\E[\SetIn{v \in \nbrh_u}\mid D_u]$ equals $D_u/N$.
Hence, by the tower property of nested expectation, 
\EQN{ \label{eq:edge_prob}
\E\SetIn{u \sim v} = \E[\E[\SetIn{v \in \nbrh_u}\mid D_u]] = \frac{\E[D_u]}{N} = \frac{\meandeg}{N}. 
}
Moreover, for fixed $v\in V$, the random variables $\SetIn{u \sim v}$ are \iid across $u\in U$, since distinct senders choose neighbors independently.
Hence, $\deg(v) = \sum_{u \in U}\SetIn{u \sim v}$ is a $\Bin(N, {\meandeg}/{N})$ random variable for each $v\in V$. 
\end{rem}

The following lemma will be useful in analyzing the performance of our matching algorithms.

\begin{lem}
\label{lem:cond_degree}
Let $u\in U$ and $v\in V$.
Let $D^\ast_u$ be a random variable with the conditional distribution of the degree of $u$ given that $u \sim v$. 
Then, $D^\ast_u$ has the size-biased distribution, i.e., for each $k \in \set{1, \dots, N}$
\EQ{
P\set{D^\ast_u=k} = P(\set{D_u=k}\mid\set{u \sim v}) = \frac{kP\set{D=k}}{\meandeg}. 
}
\end{lem}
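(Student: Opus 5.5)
The plan is to apply Bayes' rule directly, conditioning on the out-degree of $u$ and reusing the edge-probability computation of Remark~\ref{rem:EdgeProb}. For $k \in \set{1,\dots,N}$ we write
\EQ{
P(\set{D_u=k}\mid\set{u\sim v}) = \frac{P(\set{u\sim v}\mid\set{D_u=k})\,P\set{D_u=k}}{P\set{u\sim v}},
}
so the task reduces to identifying the three factors on the right.

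\emph{The likelihood.} Under Assumption~\ref{assum:DRG-IG}, conditional on $D_u=k$, sender $u$ picks its $k$ neighbours uniformly at random from the $N$ receivers. By exchangeability, each fixed receiver $v$ is included with probability $k/N$, so $P(\set{u\sim v}\mid\set{D_u=k}) = k/N$. This is the statement $\E[\SetIn{v\in\nbrh_u}\mid D_u]=D_u/N$ already recorded in Remark~\ref{rem:EdgeProb}. \emph{The denominator.} By \eqref{eq:edge_prob}, $P\set{u\sim v} = \E D_u/N$.

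Substituting these gives
\EQ{
P(\set{D_u=k}\mid\set{u\sim v}) = \frac{(k/N)\,P\set{D_u=k}}{\E D_u/N} = \frac{k\,P\set{D_u=k}}{\E D_u}.
}
Note that $k=0$ receives zero mass, consistent with the support $\set{1,\dots,N}$.

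The only delicate point is bookkeeping: $D_u$ is a copy of $\min\set{D,N}$, not of $D$ itself. For $k\le N-1$ we have $P\set{D_u=k}=P\set{D=k}$. To match the stated formula exactly, I would read the statement in the regime the paper implicitly uses, namely that $D$ is supported on $\set{0,\dots,N}$ as in Assumption~\ref{assum:DRG}. Then $\min\set{D,N}=D$, so $\E D_u=\meandeg$ and $P\set{D_u=k}=P\set{D=k}$ for all $k$. I would add a brief remark that otherwise the same identity holds with $D$ replaced by $\min\set{D,N}$ throughout. There is also the question of whether "uniformly at random" means sampling without replacement, which gives exactly $k$ distinct neighbours. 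I would state the $k/N$ inclusion probability as following from symmetry among receivers in either reading, since this is all Remark~\ref{rem:EdgeProb} uses. I expect no substantive obstacle beyond this truncation and interpretation issue.
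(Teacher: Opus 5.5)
Your proposal is correct and is the paper's argument: the paper writes $P\set{D^\ast_u=k} = P\set{D_u=k}\,\E[\SetIn{u\sim v}\mid D_u=k]/\E\SetIn{u\sim v}$ and reads off both factors from Remark~\ref{rem:EdgeProb}. Your observation that this tacitly needs $D\le N$ (so that $\min\set{D,N}=D$) is a fair point the paper glosses over. One side remark needs correcting: the inclusion probability $k/N$ does not hold ``in either reading,'' since with $k$ independent draws with replacement it would be $1-(1-1/N)^k$. The lemma therefore relies on $u$ having exactly $k$ distinct neighbours, which is the reading the paper intends.
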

\begin{IEEEproof}
By the conditional probability formula,
\begin{align*}
P\set{D^\ast_u=k}
&= P\set{D_u=k} \frac{\E[\SetIn{u \sim v}\mid\set{D_u=k}]}{\E\SetIn{u \sim v}}.
\end{align*} 
The claim follows from Remark \ref{rem:EdgeProb}.
\end{IEEEproof}

\begin{defn}
\label{defn:ER}
Under the DB$(\alpha)$ selection rule, let $\xi_{uv}(\alpha)$ indicate the event that sender $u$ selects receiver $v$ for a grant.
Let $\psi_v(\alpha)$ indicate the event that receiver $v$ is not matched to any sender.
Finally, let $L_N(\alpha)$ denote the size of the matching obtained.
\end{defn}

\begin {rem}
\label{rem:MatchSize}
As each sender issues a grant to a single receiver, a receiver is matched if and only if it receives at least one grant. 
Hence, 
\meq{2}{
&\psi_v(\alpha) = \prod_{u \in \nbrh_v}(1-\xi_{uv}(\alpha)),&&L_N(\alpha)= \sum_{v \in V}(1-\psi_v(\alpha)).
}
\end{rem}

\subsection{Uniform selection: $\alpha = 0$}

\begin{thm}
\label{thm:RandRxUnif}
Suppose $G=(U,V,E)$ satisfies Assumption~\ref{assum:DRG-IG}.
Then, the mean matching fraction under the DB$(0)$ algorithm is given by
\EQN{
\label{eqn:MeanMatchUnif}
\frac{1}{N} \E L_N(0) = 1 - \left(1 - \frac{1-P\set{D=0}}{N} \right)^N.
}
\end{thm}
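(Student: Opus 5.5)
By Remark~\ref{rem:MatchSize} and linearity of expectation, $\frac1N\E L_N(0) = 1 - \frac1N\sum_{v\in V}\E\psi_v(0)$. By symmetry of the model across receivers, every term $\E\psi_v(0)$ is the same. So it suffices to fix a receiver $v$ and show
\[
\E\psi_v(0) = \Bigl(1-\frac{1-P\set{D=0}}{N}\Bigr)^N .
\]

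The key observation is that under DB$(0)$ each sender $u$ with $D_u\ge 1$ sends its grant to a receiver chosen uniformly from $\nbrh_u$. This choice depends only on $u$'s own randomness: its degree, its neighbor set, and its selection coin. Receiver degrees play no role when $\alpha=0$. Consequently the events $\set{u \text{ grants } v}$, for $u\in U$, are mutually independent, since distinct senders choose their neighbors and grants independently under Assumption~\ref{assum:DRG-IG}.

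The receiver $v$ is unmatched exactly when no sender grants it. I will rewrite $\psi_v(0)=\prod_{u\in U}(1-\xi_{uv}(0))$, with the convention $\xi_{uv}=0$ when $u\not\sim v$. Independence then gives $\E\psi_v(0)=\prod_{u\in U}\bigl(1-P\set{\xi_{uv}(0)=1}\bigr)$.

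It remains to compute $P\set{\xi_{uv}(0)=1}$. Conditioning on $D_u=k\ge1$:
\begin{itemize}
\item $v\in\nbrh_u$ with probability $k/N$, by Remark~\ref{rem:EdgeProb};
\item given that, $u$ picks $v$ with probability $1/k$.
\end{itemize}
So the conditional probability is $1/N$ whenever $k\ge1$, and $0$ when $k=0$. Averaging over $D_u$ gives $P\set{\xi_{uv}(0)=1}=(1-P\set{D=0})/N$. Equivalently, one may use Lemma~\ref{lem:cond_degree}: the probability is $\frac{\meandeg}{N}\E[1/D^\ast_u]$, and $\E[1/D_u^\ast]=\sum_{k\ge1}P\set{D=k}/\meandeg$.

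One small point needs care. Under Assumption~\ref{assum:DRG-IG} the degrees are copies of $\min\set{D,N}$. This truncation does not change $P\set{D_u=0}=P\set{D=0}$, and the computation above only uses the event $D_u\ge1$. Substituting into the product over the $N$ senders yields the claimed formula.

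The main obstacle is justifying independence across senders cleanly. I would handle it by noting that the indicator $\xi_{uv}(0)$ is a measurable function of $u$'s private randomness alone, namely $(D_u,\nbrh_u,\text{selection coin})$, and these tuples are independent across $u$. The rest is routine.
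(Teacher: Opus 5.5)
Your proposal is correct and follows essentially the same route as the paper. Both compute $\E[\xi_{uv}(0)] = (1-P\set{D=0})/N$: the paper does this by a symmetry argument, with the size-biased calculation of Lemma~\ref{lem:cond_degree} as an alternative, and your $\frac{k}{N}\cdot\frac{1}{k}$ conditioning is the same computation. Both then use independence of the $\xi_{uv}(0)$ across senders to get the product form of $\E\psi_v(0)$ and conclude via Remark~\ref{rem:MatchSize}. Your explicit justification of that independence through each sender's private randomness, and your note that the truncation $\min\set{D,N}$ leaves $P\set{D_u=0}$ unchanged, are slightly more careful than the paper's version.
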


\begin{rem}
The mean matching fraction under DB$(0)$ is insensitive to the out-degree distribution and depends only on the probability that the out-degree is non-zero.
If thinning does not change this probability, then it has no effect on the mean matching fraction under this rule.
\end{rem}

\begin{IEEEproof}
Fix $v\in V$ and consider $u\in U$. If $D_u=0$, then $u$ cannot issue a grant to $v$.
Otherwise, as it only issues a grant to a single receiver, it picks $v$ with probability $1/N$.
(While $u$ first picks a neighborhood and then picks a node within this neighborhood, by symmetry, each node in $V$ is equally likely to be chosen at the end of this two-step process.)
Hence, the overall probability that $u$ issues a grant to $v$ is 
\EQN{ 
\label{eq:meanxi_alpha0}
\E[ \xi_{uv}(0)] = \frac{1-P\set{D_u=0}}{N}.
}
Another way to see this is to note that the event $u \sim v$ has probability $\meandeg/N$ and, conditional on this event, $u$ chooses $v$ for a grant with probability $1/D^\ast_u$, where $D^\ast_u$ has the size-biased distribution from Lemma~\ref{lem:cond_degree}. 
Hence, using Lemma~\ref{lem:cond_degree}, 
\begin{align*}
\E[\xi_{uv}(0)] &= \E[\xi_{uv}(0) \SetIn{v\in \nbrh_u}] = \E\SetIn{u \sim v} \E[\xi_{uv}(0)\mid v\in \nbrh_u] \\
&= \frac{\meandeg}{N}\sum_{k=1}^{N} \frac{1}{k} P\set{D^\ast=k}
= \frac{1}{N} \sum_{k=1}^{N} P\set{D=k},
\end{align*}
which coincides with the expression in~\eqref{eq:meanxi_alpha0}. 
Moreover, by the independence of sender neighborhoods in Assumption~\ref{assum:DRG-IG}, $(\xi_{uv}(0),u\in U)$ are the indicators of mutually independent events. 
Hence,
\EQ{
\E[\psi_v(0)] = \prod_{u\in U} (1-\E[\xi_{uv}(0)])
= \left( 1- \frac{1-P\set{D=0}}{N} \right)^N,
}
and \eqref{eqn:MeanMatchUnif} follows from Remark~\ref{rem:MatchSize}. 
\end{IEEEproof}

\begin{cor}
\label{cor:RandRxUnif}
Two easy special cases for sender degree distributions for a fixed positive mean $\meandeg$ are as follows. 
\begin{compactenum}[(a)]
\item For deterministic sender degree $D = \meandeg \in \mathbb{N}$, the mean matching fraction is 
\EQ{
\frac1N\E L_N(0) = 1 - \left(1 - \frac{1}{N} \right)^N.
} 
The limiting mean matching fraction is 
\EQ{
\lim_{N\to\infty}\frac1N\E L_N(0) = 1- \frac1e.
}
\item For binomial sender degree distribution $\Bin(N, \meandeg/N)$, the mean matching fraction is
\EQ{
\frac1N\E L_N(0) = 1 - \left(1 - \frac{1-(1-\frac{\meandeg}{N} )^N}{N}\right)^N. 
}
The limiting mean matching fraction is 
\EQ{
\lim_{N\to\infty}\frac1N\E L_N(0) = 1-e^{-1+e^{-\meandeg}}.
}
\end{compactenum}
\end{cor}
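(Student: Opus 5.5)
Both parts follow by substituting into Theorem~\ref{thm:RandRxUnif}. By that result, $\frac1N\E L_N(0) = 1-(1-q_N/N)^N$ with $q_N = 1-P\set{D_u=0}$, where $D_u$ is distributed as $\min\set{D,N}$ under Assumption~\ref{assum:DRG-IG}. Since $\min\set{D,N}=0$ exactly when $D=0$, we have $q_N = 1-P\set{D=0}$ in both cases. So the plan is to compute $P\set{D=0}$ for each distribution and then pass to the limit using the standard fact that $(1-x_N/N)^N \to e^{-x}$ whenever $x_N\to x$.

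\textbf{Part (a).} The degree is $D=\meandeg\in\N$, a positive integer, so $P\set{D=0}=0$ and $q_N=1$. This gives $\frac1N\E L_N(0)=1-(1-1/N)^N$, and the limit is $1-e^{-1}$ because $(1-1/N)^N\to e^{-1}$. Note that $\min\set{\meandeg,N}=\meandeg$ once $N\ge\meandeg$, although this does not affect $q_N$.

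\textbf{Part (b).} For $D\sim\Bin(N,\meandeg/N)$ we have $P\set{D=0}=(1-\meandeg/N)^N$. Hence $q_N = 1-(1-\meandeg/N)^N$, which gives the stated finite-$N$ formula. For the limit, note that $q_N\to 1-e^{-\meandeg}$, since $(1-\meandeg/N)^N\to e^{-\meandeg}$ with $\meandeg$ fixed.

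The only step needing a little care is the limit of $(1-q_N/N)^N$ when $q_N$ depends on $N$. I would justify it by taking logarithms:
\[
N\log(1-q_N/N) = -q_N + O(q_N^2/N) \to -(1-e^{-\meandeg}),
\]
which holds because $q_N\in[0,1]$ is bounded. Exponentiating yields
\[
\lim_{N\to\infty}\tfrac1N\E L_N(0) = 1-e^{-1+e^{-\meandeg}}.
\]

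There is no real obstacle: the argument is direct substitution plus this elementary uniform limit.
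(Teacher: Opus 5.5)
Your proposal is correct and takes the paper's route: the paper states this corollary without a separate proof, as a direct substitution of $P\set{D=0}=0$ and $P\set{D=0}=(1-\meandeg/N)^N$ into Theorem~\ref{thm:RandRxUnif}, followed by the elementary limits. Your logarithmic justification of $(1-q_N/N)^N\to e^{-(1-e^{-\meandeg})}$ for $N$-dependent $q_N$ is a detail the paper leaves unstated.
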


\subsection{Asymptotic independence}
We can no longer carry out exact calculations when $\alpha \neq 0$, because grant decisions of distinct senders may no longer be independent.
Nevertheless, they are independent for senders with disjoint neighborhoods. This fact lets us approximate matching probabilities under DB$(\alpha)$ with simpler expressions calculated under an independence assumption.
In the following, we focus on the limiting regime in which $N$ tends to infinity, while the sender degree distribution remains fixed.
Let $\meandeg$ and $\E[D^2]$ denote its first and second moments.

\begin{lem} \label{lem:ProbDisconn1}
Fix $\epsilon>0$ and $v\in V$, and suppose that 
\EQ{
\deg(v) < \min \set{ \sqrt{\frac{N (\meandeg)^2}{(\E D^2)^2}}, \sqrt[3]{\frac{\epsilon N (\meandeg)^2}{(\E D^2)^2}} }.
}
Then, $\E[\psi_v(\alpha)\mid \nbrh_v]$, the conditional probability that node $v$ is unmatched given its neighborhood, is upper bounded by 
\begin{align*}
& e^{\epsilon} (1-\E[\xi_{uv}(\alpha)\mid \nbrh_v, u \in \nbrh_v])^{\deg(v)} + \frac{1}{2}\sqrt[3]{\frac{\epsilon^2 (\E D^2)^2}{N (\meandeg)^2}}.
\end{align*}
\end{lem}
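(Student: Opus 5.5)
Conditional on $\nbrh_v$, write $d=\deg(v)$ and $\nbrh_v=\set{u_1,\dots,u_d}$. The plan is to compare the correlated grant indicators $\xi_{u_iv}(\alpha)$ with independent copies, using the fact noted above: grant decisions of senders with disjoint neighborhoods are independent. Let $B$ be the bad event that two distinct senders $u_i,u_j\in\nbrh_v$ share a neighbor other than $v$. Equivalently, the two-hop neighborhood of $v$ is not a tree once $v$ is removed. On $B^c$, the neighborhoods $\nbrh_{u_i}\setminus\set{v}$ are disjoint. However, $\xi_{u_iv}$ depends on the degrees of $u_i$'s neighbors, and those degrees involve senders outside $\nbrh_v$. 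So a cleaner route is to define $B$ more broadly, or to argue through the random-graph construction. In Assumption~\ref{assum:DRG-IG} the senders choose their neighborhoods independently. Given $\nbrh_v$ and the event that the second neighborhoods are disjoint, the configuration seen by each $u_i$ is the degrees of its own neighbors. These degrees are built from the other senders' edges, and I would try to argue that, conditioned on $B^c$, the selections are asymptotically independent. At minimum I would bound their joint probability by the product of marginals times a correction factor.

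\textbf{Step 1: splitting on $B$.} Write
\[
\E[\psi_v\mid\nbrh_v]\le \E[\psi_v\Ind{B^c}\mid\nbrh_v]+P(B\mid\nbrh_v).
\]
The pairs in $\nbrh_v$ number at most $d^2/2$. Using Lemma~\ref{lem:cond_degree}, each $u_i$ has size-biased degree $D^\ast$ with mean $\E D^2/\meandeg$. The probability that two given such senders share one of their other neighbors is therefore at most $(\E D^2/\meandeg)^2/N$. This gives
\[
P(B\mid\nbrh_v)\le \frac{d^2(\E D^2)^2}{2N(\meandeg)^2}.
\]
The hypothesis $d^3<\epsilon N(\meandeg)^2/(\E D^2)^2$ gives $d^2<\bigl(\epsilon N(\meandeg)^2/(\E D^2)^2\bigr)^{2/3}$. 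Hence this bound is at most
\[
\frac12\sqrt[3]{\frac{\epsilon^2(\E D^2)^2}{N(\meandeg)^2}},
\]
which is exactly the additive term in the statement.

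\textbf{Step 2: the multiplicative factor.} On $B^c$ I would bound $\E\bigl[\prod_i(1-\xi_{u_iv})\Ind{B^c}\mid\nbrh_v\bigr]$ by $\prod_i P(\xi_{u_iv}=0\mid\nbrh_v,B^c)$. By exchangeability this equals $\bigl(1-\E[\xi_{uv}\mid\nbrh_v,B^c]\bigr)^{d}$. To replace the conditioning on $B^c$ by the unconditional marginal, I would use
\[
\E[\xi\mid B^c]\ge \E[\xi]-P(B)/P(B^c).
\]
Then $(1-p+\delta)^d\le(1-p)^d e^{d\delta/(1-p)}$. Since $d\,P(B)\lesssim d^3(\E D^2)^2/(N(\meandeg)^2)<\epsilon$, the correction factor is bounded by $e^{\epsilon}$. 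The condition $d<\sqrt{N(\meandeg)^2/(\E D^2)^2}$ ensures $P(B)\le1/2$, which keeps $P(B^c)$ bounded below. Constants will need care to land exactly on $e^\epsilon$.

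\textbf{Main obstacle.} The delicate step is the conditional independence on $B^c$. Receiver degrees of $u_i$'s neighbors are influenced by senders outside $\nbrh_v$, and these outside senders may be shared across the $u_i$. I would handle this by noting that, on $B^c$, the degrees $\deg(w)$ for $w\in\nbrh_{u_i}\setminus\set{v}$ over different $i$ are disjoint sums of independent edge indicators $\SetIn{x\sim w}$, with $x$ outside $\nbrh_v$. Across distinct $w$ these indicators are only weakly (negatively) dependent, through each sender's fixed out-degree. I would either absorb this into $B$ by also excluding outside senders that hit two different second neighborhoods, a union bound of the same order, or invoke negative association. Conditional on those degrees, the choices are independent given the graph.
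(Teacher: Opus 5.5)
Your Step 1 is the paper's argument. Your $B^c$ is the paper's good event $G_v$, and the size-biased union bound $P(B\mid\nbrh_v)\le d^2(\E D^2)^2/(2N(\meandeg)^2)$ and its conversion into the additive term are the same. The paper likewise then invokes independence of the $\xi_{uv}(\alpha)$ on the good event.

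The gap is in Step 2, where you compare the conditional marginal with the unconditional one \emph{additively}. Write $p=\E[\xi_{uv}(\alpha)\mid\nbrh_v,u\in\nbrh_v]$ and $\delta=P(B\mid\nbrh_v)/P(B^c\mid\nbrh_v)$. You obtain $(1-p+\delta)^d\le(1-p)^d e^{d\delta/(1-p)}$, but the hypotheses only control $d\delta$ (by $d^3(\E D^2)^2/(N(\meandeg)^2)<\epsilon$), not $d\delta/(1-p)$. Nothing in the lemma keeps $1-p$ away from $0$, because it is stated for every $\alpha$ and every degree law. For example, with $D\equiv 2$ and large positive $\alpha$, a neighbor $u$ of a degree-$d$ receiver $v$ grants to $v$ with probability close to one, while $\delta$ is still of order $d^2/N$. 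When $1-p\ll\delta$, $(1-p+\delta)^d\approx\delta^d$, which is not bounded by $e^{\epsilon}(1-p)^d$. The leftover slack between $P(B\mid\nbrh_v)$ and the additive term in the statement is not quantified either. So this is not a matter of tuning constants: the additive route does not give the stated form.

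The repair is the paper's multiplicative comparison, which uses only $1-\xi_{uv}(\alpha)\ge 0$:
\[
\E[1-\xi_{uv}(\alpha)\mid\nbrh_v,B^c]=\frac{\E[(1-\xi_{uv}(\alpha))\Ind{B^c}\mid\nbrh_v]}{P(B^c\mid\nbrh_v)}\le\frac{1-p}{P(B^c\mid\nbrh_v)}.
\]
Hence the product over $\nbrh_v$ is at most $(1-p)^d(1-x)^{-d}$ with $x=d^2(\E D^2)^2/(2N(\meandeg)^2)$. The first hypothesis gives $x<1/2$, so $1-x\ge e^{-2x}$ yields $(1-x)^{-d}\le\exp(d^3(\E D^2)^2/(N(\meandeg)^2))<e^{\epsilon}$ by the second hypothesis. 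No lower bound on $1-p$ is needed.

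On your ``main obstacle'': the paper simply asserts that the $\xi_{uv}(\alpha)$, $u\in\nbrh_v$, are mutually independent given $\nbrh_v$ and $G_v$. It does not address the coupling through senders outside $\nbrh_v$ that you point out. That coupling is real for non-Binomial $D$: with deterministic $D$, degrees of receivers in different second neighborhoods are negatively correlated. Your treatment of it is only a sketch, however. A correction factor larger than $1$ could not be absorbed, since in the stated bound $e^{\epsilon}$ is already needed for $P(B^c\mid\nbrh_v)^{-d}$. What you actually need is the one-sided bound $\E[\prod_{u\in\nbrh_v}(1-\xi_{uv}(\alpha))\mid\nbrh_v,B^c]\le\prod_{u\in\nbrh_v}\E[1-\xi_{uv}(\alpha)\mid\nbrh_v,B^c]$. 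Your negative-association idea points that way for $\alpha\le 0$, where the conditional probability of not granting to $v$ is decreasing in the degrees of $u$'s other neighbors, but you have not carried it out.
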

\begin{IEEEproof}
When $\deg(v) \in \set{0,1}$, the statement of the lemma is immediate.
For each receiver $v$ with $\deg(v) \ge 2$, we define the good event
\begin{equation}
\label{eqn:GE}
G_v \triangleq \Big\{\bigcup_{\set{k,\ell} \subseteq \nbrh_v} (\nbrh_k \cap \nbrh_\ell\setminus\set{v}) = \emptyset\Big\}.
\end{equation}
Recall from Remark \ref{rem:MatchSize} that $\psi_v(\alpha)$, the indicator of the event that receiver $v$ remains unmatched, equals $\prod_{u\in\nbrh_v} \big( 1-\xi_{uv}(\alpha) \big)$. 
As this product is bounded above by $1$, we obtain 
\begin{equation*}
\begin{aligned}
&\E[\psi_v(\alpha)\mid\nbrh_v]= \E\left[ \prod_{u\in\nbrh_v}(1-\xi_{uv}(\alpha)) \Bigm| \nbrh_v \right] \\
&\le \E \left(\prod_{u\in\nbrh_v}(1-\xi_{uv}(\alpha))\Ind{G_v} \Bigm| \nbrh_v \right) + \E [\Ind{G_v^c} \mid \nbrh_v] \\
&\le \E \left(\prod_{u\in\nbrh_v}(1-\xi_{uv}(\alpha)) \Bigm| G_v, \nbrh_v \right) + P(G_v^c \mid \nbrh_v).
\end{aligned}
\end{equation*}
Note that, conditioned on the sender set $\nbrh_v$ and the good event $G_v$, the random variables $(\xi_{uv}(\alpha), u\in\nbrh_v)$ are mutually independent. 
Using this independence, the expression above, and the fact that $1-\xi_{uv}(\alpha)\ge 0$, we obtain
\begin{equation} \label{eq:unmatched_prob}
\begin{split}
&\E[\psi_v(\alpha)\mid\nbrh_v]
\le \prod_{u\in \nbrh_v} \E[1-\xi_{uv}(\alpha) \mid G_v, \nbrh_v] + P(G_v^c \mid \nbrh_v) \\
&= \prod_{u\in \nbrh_v} \frac{\E [(1-\xi_{uv}(\alpha)) \Ind{G_v}\mid \nbrh_v]}{P(G_v\mid \nbrh_v)} + P(G_v^c \mid \nbrh_v) \\
&\le \prod_{u\in \nbrh_v} \frac{\E [(1-\xi_{uv}(\alpha))\mid \nbrh_v]}{P(G_v\mid \nbrh_v)} + P(G_v^c \mid \nbrh_v).
\end{split}
\end{equation}

Next, we derive an upper bound on $P(G_v^c|\nbrh_v)$.
Fix distinct senders $k,\ell \in \nbrh_v$ and notice that the receiver sets $\nbrh_k$ and $\nbrh_\ell$ are independent. 
Moreover, conditional on $k,\ell \in \nbrh_v$, $\deg(k)$ and $\deg(\ell)$ have the size-biased distribution $D^\ast$. Hence,
\begin{equation*}
\begin{aligned}
&\E\abs{\nbrh_k \cap \nbrh_{\ell}\setminus\set{v}}
= \sum_{w\in V\setminus\set{v}} P\set{w \in \nbrh_k} P\set{w \in \nbrh_\ell} \\
&= (N-1)\left(\frac{\E{D^\ast}-1}{N-1}\right)^2 \le \frac{(\E D^\ast)^2}{N}
= \frac{(\E D^2)^2}{N(\meandeg)^2},
\end{aligned}
\end{equation*}
where we have used Lemma~\ref{lem:cond_degree} to obtain the last equality.
Consequently, by the union bound, we get
\begin{equation*}
\begin{aligned}
&\E \Big[\abs{\cup_{\set{k,\ell} \subseteq \nbrh_v} \nbrh_k \cap \nbrh_\ell\setminus\set{v}} \Bigm| \nbrh_v \Big] \\
&\le \E \Big[ \sum_{\set{k,\ell} \subseteq \nbrh_v} \abs{\nbrh_k \cap \nbrh_\ell\setminus\set{v}} \Bigm| \nbrh_v \Big] \\
&\le \binom{\abs{\nbrh_v}}{2} \frac{(\E D^2)^2}{N(\meandeg)^2}.
\end{aligned}
\end{equation*}
We observe that, from the definition of $G_v$ in \eqref{eqn:GE}, we have $G_v^c = \set{\abs{\bigcup_{\set{k,\ell} \subseteq \nbrh_v} (\nbrh_k \cap \nbrh_\ell\setminus\set{v})} \ge 1}$.
Therefore, by Markov's inequality,
\begin{equation}
\label{eq:good_event_prob}
\begin{split}
P(G_v^c \mid \nbrh_v)
&\le \E \Bigl[\abs{\cup_{\set{k,\ell} \subseteq \nbrh_v} \nbrh_k \cap \nbrh_\ell\setminus\set{v}} \Bigm| \nbrh_v \Bigr] \\
&\le \frac{\deg(v)^2}{2N} \frac{(\E D^2)^2}{(\meandeg)^2},
\end{split}
\end{equation}
where we used $\abs{\nbrh_v}=\deg(v)$ in the bound for the binomial coefficient.
Substituting \eqref{eq:good_event_prob} into \eqref{eq:unmatched_prob}, and noting that $\E[\xi_{uv}(\alpha)\mid\nbrh_v]$ is the same for all senders $u \in \nbrh_v$ by symmetry, we get that
\begin{equation}
\label{eq:unmatched_prob_bd1}
\begin{split}
\E[\psi_v(\alpha)\mid \nbrh_v] &\le \frac{\deg(v)^2(\E D^2)^2}{2N(\meandeg)^2}\\
&+\Bigl(\frac{1-\E[\xi_{uv}(\alpha)\mid \nbrh_v, u\in \nbrh_v]}{1-\frac{\deg(v)^2 (\E D^2)^2}{2N (\meandeg)^2}}\Bigr)^{\deg(v)}.
\end{split}
\end{equation}
By assumption, we have $\deg(v)^2(\E D^2)^2/(2N(\meandeg)^2)<1/2$.
Combining this with the fact that $1-x\ge e^{-2x}$ on $[0,1/2]$, we obtain
\EQ{
\Bigl( 1-\frac{\deg(v)^2(\E D^2)^2}{2N(\meandeg)^2} \Bigr)^{-\deg(v)} \le \exp\Bigl(\frac{\deg(v)^3(\E D^2)^2}{N(\meandeg)^2}\Bigr).
}
Also by assumption, $\deg(v)< \sqrt[3]{\frac{\epsilon N (\meandeg)^2}{(\E D^2)^2}}$ and, hence,
\meq{2}{
&\frac{\deg(v)^3 (\E D^2)^2}{N (\meandeg)^2} < \epsilon,&&\frac{(\E D^2)^2\deg(v)^2}{2N (\meandeg)^2} < \frac12\sqrt[3]{\frac{\epsilon^2(\E D^2)^2}{N(\meandeg)^2}}.
}
Substituting these results in ~\eqref{eq:unmatched_prob_bd1}, we obtain the claim of the lemma.
\end{IEEEproof}

\begin{rem}
From Remark \ref{rem:EdgeProb}, $\deg(v)$ is a $\Bin(N, {\meandeg}/{N})$ random variable.
Therefore, the assumption in Lemma~\ref{lem:ProbDisconn1} that $\deg(v)$ is bounded by the minimum of the two stated terms holds with high probability as $N$ tends to infinity.
The additive error term in the lemma tends to zero as $N$ tends to infinity, and the multiplicative factor $e^{\epsilon}$ can be made arbitrarily close to $1$ by choosing $\epsilon$ small.
Hence, the lemma says that we can approximate the conditional probability of receiver $v$ being unmatched, given its neighborhood $\nbrh_v$, by the product of probabilities that it is unmatched to each of its neighbors.
Thus, the selection indicators $(\xi_{uv}(\alpha): u \in \nbrh_v)$ become asymptotically independent, with error bounded by the lemma.
\end{rem}

\subsection{Greedy selection: $\alpha=-\infty$}
In this section, we derive a lower bound on the mean matching size when employing the degree-biased algorithm with exponent $\alpha=-\infty$.
In this case, a sender $u$ restricts its grant to receivers with the smallest degree among its neighbors.
If there is more than one such receiver, one is chosen uniformly at random.
We first introduce some notation.

\begin{defn}
\label{defn:PalmBinom}
For a Binomial random variable with parameters $(N,p)$, we denote its probability mass function by $q(N,p)$, its distribution function by $Q(N,p)$, and its complementary distribution function by $\bar{Q}(N,p)$.
That is, for $0\le k\le N$,
\begin{align*}
q_k(N,p) &\triangleq \binom{N}{k}p^k(1-p)^{N-k}, \\
Q_k(N,p) &\triangleq \sum_{\ell=0}^k q_\ell(N,p), \\
\bar{Q}_k(N,p) &\triangleq 1-Q_k(N,p).
\end{align*}
\end{defn}

\begin{lem}
\label{lem:BinomReciprocal}
Suppose $X$ has a Binomial distribution with parameters $N$ and $p>0$.
Then, for all $\theta \in \R$,
\EQ{
\E \left[\frac{\theta^{X+1}}{X+1} \right] = \frac{(1-p+p\theta)^{N+1}-(1-p)^{N+1}}{(N+1)p}.
}
\end{lem}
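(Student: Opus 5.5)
The plan is to expand the expectation termwise and recognise a shifted binomial sum. Writing out the definition with the mass function $q_k(N,p)$ from Definition~\ref{defn:PalmBinom}, we have
\begin{equation*}
\E\left[\frac{\theta^{X+1}}{X+1}\right] = \sum_{k=0}^{N} \frac{\theta^{k+1}}{k+1}\binom{N}{k}p^k(1-p)^{N-k}.
\end{equation*}
The key step is the identity $\frac{1}{k+1}\binom{N}{k} = \frac{1}{N+1}\binom{N+1}{k+1}$, which absorbs the reciprocal into a binomial coefficient of order $N+1$. After substituting $j=k+1$ and pulling out the factor $1/((N+1)p)$, the sum becomes
\begin{equation*}
\frac{1}{(N+1)p}\sum_{j=1}^{N+1}\binom{N+1}{j}(p\theta)^j(1-p)^{N+1-j}.
\end{equation*}

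Next I would complete this to a full binomial expansion by adding and subtracting the $j=0$ term, which equals $(1-p)^{N+1}$. The binomial theorem then gives $(1-p+p\theta)^{N+1}-(1-p)^{N+1}$ in the numerator, which is exactly the claimed expression. The hypothesis $p>0$ is needed only to justify dividing by $p$.

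An alternative route, useful as a check, is to write $\frac{\theta^{X+1}}{X+1}=\int_0^\theta t^X\,dt$ and exchange the expectation with the integral; this is valid since the sum is finite. Then $\E[t^X]=(1-p+pt)^N$, and integrating in $t$ from $0$ to $\theta$ yields the same formula. This works for any real $\theta$, including negative values, because the integral is taken with sign.

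There is no real obstacle here. The only points needing care are the index shift and keeping track of the $j=0$ term.
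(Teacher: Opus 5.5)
Your proof is correct and follows the paper's argument: absorb the reciprocal using $\frac{N+1}{k+1}\binom{N}{k}=\binom{N+1}{k+1}$, shift the index, and recognise the binomial expansion of $(1-p+p\theta)^{N+1}$ with its $j=0$ term $(1-p)^{N+1}$ removed. Your integral check via $\int_0^\theta \E[t^X]\,dt$ is also valid; it is not needed.
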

\begin{IEEEproof}
Since the distribution of $X$ is Binomial with parameters $(N,p)$, we can write
\EQ{
\E \left[ \frac{\theta^{X+1}}{X+1} \right] = \sum_{k=0}^N\frac{1}{k+1}\binom{N}{k} \theta^{k+1} p^k(1-p)^{N-k}.
}
Using the fact that $\frac{N+1}{k+1}\binom{N}{k} = \binom{N+1}{k+1}$ and a change of variables in the summation index, we can rewrite the above equation as
\EQ{
\E \left[\frac{\theta^{X+1}}{X+1} \right] = \frac{1}{(N+1)p}\sum_{k=1}^{N+1}\binom{N+1}{k}(\theta p)^{k}(1-p)^{N+1-k}.
}
The expression is then obtained from the binomial expansion of $(1-p+p\theta)^{N+1}$.
\end{IEEEproof}

\begin{lem}
\label{lem:ProbConnGreedy}
Fix sender $u\in U$ with out-degree $D_u$.
Suppose that the residual degrees of its neighbors, namely the number of their neighbors excluding $u$, are mutually independent $\Bin(N-1,\meandeg/N)$ random variables. 
Under the DB($-\infty$) rule, we have
\begin{equation*}
\begin{split}
&\E \big[\xi_{uv}(-\infty)\mid \nbrh_u, \deg(v)=s \big] \\
&= \SetIn{u \in \nbrh_v} \frac{\bar{Q}_{s-2}(N-1,\frac{\meandeg}{N} )^{D_u}-(\bar{Q}_{s-1}(N-1,\frac{\meandeg}{N}))^{D_u}}{D_uq_{s-1}(N-1,\frac{\meandeg}{N})}
\end{split}
\end{equation*}
where the expression is understood to be zero when $s-1$ lies outside the support of the residual degree distribution.
\end{lem}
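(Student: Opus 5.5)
Fix $u$ with $D_u=d\ge 1$ and a neighbor $v\in\nbrh_u$ with $\deg(v)=s$, so that $v$ has residual degree $s-1$. Write $p=\meandeg/N$. For the other $d-1$ neighbors $w$ of $u$, write $\deg(w)=1+R_w$, where the $R_w$ are i.i.d.\ $\Bin(N-1,p)$ by hypothesis and independent of the residual degree of $v$. If $u\notin\nbrh_v$, then $u$ cannot select $v$ and both sides vanish; so the indicator factor is immediate and we assume $u\sim v$. Under DB$(-\infty)$, $u$ selects $v$ exactly when $\deg(v)$ is minimal among $\{\deg(w):w\in\nbrh_u\}$, and ties are broken uniformly. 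Conditioning on the number $J$ of other neighbors with residual degree exactly $s-1$, with all remaining neighbors having residual degree at least $s$, the grant probability is
\begin{equation*}
\sum_{j=0}^{d-1}\binom{d-1}{j}q_{s-1}^{\,j}\,\bar{Q}_{s-1}^{\,d-1-j}\,\frac{1}{j+1},
\end{equation*}
where $q_{s-1}=q_{s-1}(N-1,p)$ and $\bar{Q}_{s-1}=\bar{Q}_{s-1}(N-1,p)$. Equivalently, this is $\E[1/(J+1)]$ on the event that no other neighbor has residual degree below $s-1$.

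The key step is to evaluate this sum in closed form with Lemma~\ref{lem:BinomReciprocal}. Factor out $\bar{Q}_{s-2}^{\,d-1}$, where $\bar{Q}_{s-2}=q_{s-1}+\bar{Q}_{s-1}$. The remaining sum is then $\E[1/(X+1)]$ for $X\sim\Bin(d-1,\pi)$ with $\pi=q_{s-1}/\bar{Q}_{s-2}$. Applying Lemma~\ref{lem:BinomReciprocal} with $\theta=1$ and $N$ replaced by $d-1$ gives
\begin{equation*}
\E\Bigl[\frac{1}{X+1}\Bigr]=\frac{1-(1-\pi)^{d}}{d\,\pi}.
\end{equation*}
Substituting $1-\pi=\bar{Q}_{s-1}/\bar{Q}_{s-2}$ and multiplying by $\bar{Q}_{s-2}^{\,d-1}$ yields
\begin{equation*}
\frac{\bar{Q}_{s-2}^{\,d}-\bar{Q}_{s-1}^{\,d}}{d\,q_{s-1}},
\end{equation*}
which is the claimed expression with $d=D_u$.

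Two edge cases need checking. When $s=1$, we have $\bar{Q}_{-1}=1$, and the formula still holds. When $q_{s-1}=0$, i.e.\ $s-1$ lies outside the support, the convention in the statement applies; alternatively, the expression can be read as its limit, since $v$ then has probability zero. The main obstacle is the conditioning: the conditional law of the other neighbors' degrees given $\deg(v)=s$ must remain i.i.d.\ $\Bin(N-1,p)$ in residual form. This is supplied by the independence hypothesis of the statement, so I will invoke it explicitly rather than derive it from Assumption~\ref{assum:DRG-IG}. I will also take care that the tie-breaking uniformity gives exactly the $1/(J+1)$ factor.
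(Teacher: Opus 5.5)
Your proposal is correct and is essentially the paper's proof. The paper writes the grant probability as $P\set{\chi_{uv}=1}=\bar{Q}_{s-2}^{D_u-1}$ (every other neighbor of $u$ has degree at least $s$) times $\E[1/K_u(s)\mid\chi_{uv}=1]$, where $K_u(s)-1\sim\Bin(D_u-1,q_{s-1}/\bar{Q}_{s-2})$; this is exactly your factoring of the binomial sum, and the paper then applies Lemma~\ref{lem:BinomReciprocal} with $\theta=1$ just as you do.
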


\begin{rem}
Each sender $w\in U$ with degree $D_w$, $w\neq u$, includes $v\in V$ in its neighborhood with probability $D_w/N$ conditional on its degree and, therefore, with unconditional probability $\E D/N$. 
Moreover, choices by distinct nodes in $U$ are mutually independent. 
Hence, the neighbors of $u$ have the residual degree distribution claimed in the lemma, and the real content of the assumption is their independence. Now, independence does not hold in general for finite $N$, unless the out-degree distribution is Binomial, in which case we have a bipartite Erd\H{o}s-R\'enyi random graph. 
Nevertheless, it is reasonable to expect asymptotic independence under mild conditions on the distribution of $D$, such as the finiteness of some number of moments. 
For example, it is known that the configuration model exhibits local weak (or Benjamini-Schramm) convergence to a Galton-Watson tree, which implies independence of node degrees; see, e.g.,~\cite[Theorem 4.1]{RvdH2}. 
We were unable to find similar results for bipartite graphs, and the pursuit of one is well outside the scope of this paper, but Lemma~\ref{lem:ProbDisconn1} suggests that our assumption is plausible.
\end{rem}

\begin{IEEEproof}
Fix receiver $v\in V$ with $\deg(v)=s$. 
The lemma holds trivially if $s=0$. 
Therefore, we assume $s\ge 1$ without loss of generality.
Under DB$(-\infty)$, sender $u$ can issue a grant to receiver $v$ only if $v$ has the smallest degree among the neighbors of $u$. 
Let $\chi_{uv}$ be the indicator of this event, i.e.,
\EQ{
\chi_{uv} = \SetIn{u \in \nbrh_v} \prod_{w \in \nbrh_u} \SetIn{\deg(w) \ge \deg(v)}. 
}
Conditional on this event, $v$ is chosen for a grant with probability $1/K_u(s)$, where $K_u(s)$ is the number of neighbors of $u$ with degree $s$:
\begin{equation*}
K_u(s) \triangleq \abs{\set{w\in \nbrh_u: \deg(w)=s}}.
\end{equation*} 

For all neighbors $w$ of $u$ to have $\deg(w) \ge s$, they must all have residual degree at least $s-1$. 
By the assumption in the lemma that the residual degrees of the neighbors of $u$ are independent Binomial random variables, we obtain 
\begin{equation} \label{eq:prob_min_deg_s}
\begin{split}
&\E[\chi_{uv}\mid \nbrh_u, \deg(v)=s] \\
&= \SetIn{u \in \nbrh_v} \bar{Q}_{s-2}\Bigl( N-1,\frac{\meandeg}{N} \Bigr)^{D_u-1}.
\end{split}
\end{equation}
This assumption further implies that for any sender $u \in \nbrh_v$, conditional on its degree $D_{u}$, the event $\chi_{uv}=1$, and $\deg(v) = s$, the random variable $K_u(s)-1$ has a Binomial distribution with parameters
\begin{equation*}
\left( D_u-1, \frac{q_{s-1}(N-1,\meandeg/N)}{\bar{Q}_{s-2}(N-1,\meandeg/N)} \right). 
\end{equation*}
Hence, by Lemma~\ref{lem:BinomReciprocal},
\begin{equation} \label{eq:condprob_grant}
\begin{split}
&\E \Bigl[ \frac{\chi_{uv}}{K_u(s)} \Bigm| \chi_{uv}, \nbrh_u, \deg(v)=s \Bigr] \\ 
&= \frac{ \chi_{uv} \left( 1-\bigl( 1-\frac{q_{s-1}(N-1,\meandeg/N)}{\bar{Q}_{s-2}(N-1,\meandeg/N)} \bigr)^{D_u} \right)}{D_u q_{s-1}(N-1,\meandeg/N)/\bar{Q}_{s-2}(N-1,\meandeg/N)} \\ 
&= \chi_{uv} \frac{ \bar{Q}_{s-2}\bigl( N-1,\frac{\meandeg}{N} \bigr) \Bigl( 1-\bigl( \frac{\bar{Q}_{s-1}(N-1,\frac{\meandeg}{N})}{\bar{Q}_{s-2}(N-1,\frac{\meandeg}{N})} \bigr)^{D_u} \Bigr)}{D_u q_{s-1}(N-1,\frac{\meandeg}{N})}.
\end{split}
\end{equation}

Since $u$ chooses $v$ for a grant with probability $1/K_u(s)$ if $v$ is a minimum-degree neighbor of $u$, and with probability zero otherwise, we see that 
\begin{equation} \label{eq:xi_cond_chi}
\begin{split}
&\E[\xi_{uv}(-\infty) \mid \chi_{uv}, \nbrh_u, \deg(v)=s] \\ 
&= \E \Bigl[ \frac{\chi_{uv}}{K_u(s)} \Bigm| \chi_{uv}, \nbrh_u, \deg(v)=s \Bigr].
\end{split}
\end{equation}
Taking expectations on both sides of \eqref{eq:xi_cond_chi} conditional on $\nbrh_u, \deg(v)=s$, using \eqref{eq:prob_min_deg_s} and \eqref{eq:condprob_grant}, we obtain from the tower property of conditional expectation that 
\begin{equation*}
\begin{aligned}
& \E [ \xi_{uv}(-\infty) \mid \nbrh_u, \deg(v)=s] \\
&= \E\Bigg[\E\Bigg[\frac{\chi_{uv}}{K_u(s)}\Bigm|\chi_{uv},\nbrh_u, \deg(v)=s\Bigg]\Bigm| \nbrh_u, \deg(v)=s\Bigg]\\ 
&= \SetIn{u \in \nbrh_v} \frac{\bar{Q}_{s-2}(N-1,\frac{\meandeg}{N})^{D_u} - \bar{Q}_{s-1}(N-1,\frac{\meandeg}{N})^{D_u}}{D_u q_{s-1}(N-1,\frac{\meandeg}{N})},
\end{aligned}
\end{equation*}
as claimed.
\end{IEEEproof}

We now consider a sparse graph asymptotic regime in which $N$, the number of senders (and receivers), tends to infinity, while the mean degree, $\meandeg$, remains constant.
In this regime, the receiver degree distribution tends to a Poisson distribution.
This motivates us to define the following. 

\begin{defn}
Denote by $\pi$ and $\Pi$ the probability mass function and distribution function respectively of a Poisson random variable with parameter $\meandeg$.
The dependence of $\pi$ and $\Pi$ on $\meandeg$ is suppressed in the notation.
\end{defn}

Using Lemmas~\ref{lem:ProbDisconn1} and~\ref{lem:ProbConnGreedy}, we derive the conditional probability that a receiver $v$ is unmatched in the limiting regime under consideration.
The out-degree distribution enters the result through its generating function. 
\begin{defn}
Let $G_D$ denote the generating function of a random variable $D$ with the out-degree distribution of senders, i.e., $G_D(z)=\E[z^{D_u}]$ for $u\in U$. 
\end{defn}

\begin{thm}
\label{thm:RandRxGreedy}
Consider a limiting regime in which $N$ tends to infinity while the mean sender degree $\meandeg$ is a constant that does not depend on $N$.
Then, 
\begin{equation*}
\liminf_{N\to \infty}\frac1N\E L_N(-\infty) \ge 1 - \sum_{s=0}^{\infty} e^{-\meandeg}\frac{(\meandeg)^s}{s!}(1-f(s))^s,
\end{equation*}
where $f:\Z_+ \to[0,1]$ is defined by $f(0)\triangleq 0$ and, for $s\ge 1$,
\begin{equation}
\label{eqn:ConProbGreedy4}
f(s) \triangleq
\frac{G_D(\bar\Pi_{s-2})- G_D(\bar\Pi_{s-1})}{\meandeg \pi_{s-1}}.
\end{equation}
\end{thm}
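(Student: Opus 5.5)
The plan is to write $\frac1N\E L_N(-\infty) = 1 - \frac1N\sum_{v\in V}\E\psi_v(-\infty) = 1-\E\psi_v(-\infty)$ by exchangeability of receivers, and to bound $\E\psi_v(-\infty)$ from above by conditioning on $\deg(v)=s$ and then on $\nbrh_v$. By Remark~\ref{rem:EdgeProb}, $\deg(v)\sim\Bin(N,\meandeg/N)$, which converges pointwise to the Poisson law $\pi$. It therefore suffices to show that, for each fixed $s$,
\begin{equation*}
\limsup_{N\to\infty}\E[\psi_v(-\infty)\mid \deg(v)=s] \le (1-f(s))^s .
\end{equation*}
The tail $\sum_{s>S}P\set{\deg(v)=s}$ is then controlled uniformly in $N$ (binomial tails with fixed mean), and we let $S\to\infty$. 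In the intermediate steps we may need finite moments of $D$; this should be assumed if necessary.

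For fixed $s$, the degree condition of Lemma~\ref{lem:ProbDisconn1} holds for all large $N$ and any fixed $\epsilon>0$. The lemma then gives
\begin{equation*}
\E[\psi_v\mid\nbrh_v]\le e^{\epsilon}\bigl(1-\E[\xi_{uv}\mid\nbrh_v,u\in\nbrh_v]\bigr)^{s}+o(1).
\end{equation*}
Next, we compute $\E[\xi_{uv}(-\infty)\mid \deg(v)=s, u\in\nbrh_v]$. Lemma~\ref{lem:ProbConnGreedy} gives this conditional on $D_u$, under the independence of residual degrees, as
\begin{equation*}
\frac{\bar Q_{s-2}^{D_u}-\bar Q_{s-1}^{D_u}}{D_u q_{s-1}}.
\end{equation*}
Given $u\in\nbrh_v$, $D_u$ has the size-biased law of Lemma~\ref{lem:cond_degree}. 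Averaging against $kP\set{D=k}/\meandeg$ cancels the factor $1/D_u$ and yields exactly
\begin{equation*}
\frac{G_D(\bar Q_{s-2})-G_D(\bar Q_{s-1})}{\meandeg\, q_{s-1}},
\end{equation*}
with the binomial quantities at parameters $(N-1,\meandeg/N)$. As $N\to\infty$ these binomial quantities converge to $\pi_{s-1},\bar\Pi_{s-2},\bar\Pi_{s-1}$. Since $G_D$ is continuous on $[0,1]$, the expression converges to $f(s)$. Letting $N\to\infty$ and then $\epsilon\downarrow0$ gives the desired per-$s$ bound. Finally, Fatou's lemma or dominated convergence (all terms lie in $[0,1]$) combined with the Poisson limit of $\deg(v)$ yields the $\liminf$ inequality.

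The main obstacle is justifying the independence hypothesis of Lemma~\ref{lem:ProbConnGreedy} asymptotically, since it is exact only in the Erd\H{o}s--R\'enyi case. I would first try a coupling argument. Conditional on $u\in\nbrh_v$ and $D_u=k$ (with $k$ truncated at a large constant, the remaining size-biased mass being small), the residual degrees of $u$'s $k$ neighbours are sums of indicators over the other senders. A sender $w$ hits two of these fixed receivers with probability $O(\E D^2/N^2)$, so over $N$ senders the probability of any overlap is $O(k^2\E D^2/N)\to0$. Off this event the residual degrees are close in total variation to independent $\Bin(N-1,\meandeg/N)$ variables, and hence to independent Poissons. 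Since $\xi_{uv}$ is a bounded function of $D_u$, these residual degrees and an independent tie-breaking, its conditional mean converges to the independent-case value above. A secondary point is that conditioning on $\deg(v)=s$ and on the structure of $\nbrh_v$ perturbs these residual laws only by $O(1/N)$, and this should be handled the same way.
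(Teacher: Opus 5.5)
Your proposal follows the paper's proof essentially step for step. You average the formula of Lemma~\ref{lem:ProbConnGreedy} over the size-biased law of $D_u$ to get the $G_D$ expression, pass to the Poisson limit using continuity of $G_D$, substitute into Lemma~\ref{lem:ProbDisconn1} for each fixed $s$, let $\epsilon\downarrow 0$, and sum over $s$ with Poisson weights; your truncation (or reverse-Fatou) treatment of the tail is a slightly tidier version of the paper's remark that large receiver degrees have vanishing probability. The one addition is your coupling sketch for the asymptotic independence of residual degrees. The paper does not prove this independence but takes it as the hypothesis of Lemma~\ref{lem:ProbConnGreedy}, which holds exactly only in the Erd\H{o}s--R\'enyi case. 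If completed, your argument would remove that implicit assumption at the cost of requiring $\E D^2<\infty$, which Lemma~\ref{lem:ProbDisconn1} already needs.
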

\begin{IEEEproof}
We recall that conditional on $u\in \nbrh_v$, the random variable $D_u$ has the size-biased distribution given in Lemma~\ref{lem:cond_degree}.
Hence, we see from the definition of $G_D$ that 
\EQN{
\label{eqn:MeanRatio}
\begin{split}
\E \Bigl[ \frac{z^{D_u}}{D_u} \Bigm| u\in \nbrh_v \Bigr] &= \sum_{k=1}^N \frac{kP\set{D_u=k}}{\E D}\frac{z^k}{k} \\
&= \frac{G_D(z)-\E\SetIn{D=0}}{\E D}.
\end{split}
}
Conditioning on $u \in \nbrh_v$ and then taking the expectation over $\nbrh_u$ on both sides of the result in Lemma~\ref{lem:ProbConnGreedy}, and using the tower property of conditional expectation and \eqref{eqn:MeanRatio}, we obtain that 
\begin{equation} \label{eqn:PreLimit}
\begin{aligned}
&\E \big[\xi_{uv}(-\infty)\mid u \in \nbrh_v, \deg(v)=s \big] \\
&= \frac{G_D(\bar{Q}_{s-2}(N-1,\frac{\meandeg}{N})) - G_D(\bar{Q}_{s-1}(N-1,\frac{\meandeg}{N}))}{\E D q_{s-1}(N-1,\frac{\meandeg}{N})} .
\end{aligned}
\end{equation}
In the limiting regime considered, the Binomial distribution with parameters $N-1$ and $\meandeg/N$ converges to a Poisson distribution with parameter $\meandeg$, i.e., for each $k \in \Z_+$,
\begin{equation*}
\lim_{N\to\infty}q_k \Bigl( N-1,\frac{\meandeg}{N} \Bigr) = \pi_k, 
\lim_{N\to\infty}Q_k \Bigl(N-1,\frac{\meandeg}{N} \Bigr) = \Pi_k.
\end{equation*}
Letting $N \to \infty$ in \eqref{eqn:PreLimit}, substituting these limiting distribution results, using continuity of the generating function $G_D$ and the definition of $f(s)$ in \eqref{eqn:ConProbGreedy4}, it follows that for any sender-receiver pair $(u,v)$, 
\begin{equation}
\label{eqn:LimEdgeProbConDeg}
\lim_{N\to \infty} \E[\xi_{uv}(-\infty) \mid u \in \nbrh_v, \deg(v)=s] = f(s).
\end{equation}
Recall from Remark~\ref{rem:EdgeProb} that $\deg(v)$ has a $\Bin(N, {\meandeg}/{N})$ distribution, which converges to a $\Pois(\meandeg)$ distribution as $N \to \infty$ when $\meandeg$ is fixed.
It follows that for any $\epsilon>0$, the probability that $\deg(v)$ exceeds
$$\min \set{\sqrt{N(\meandeg)^2/(\E D^2)^2}, \sqrt[3]{\epsilon N(\meandeg)^2/(\E D^2)^2} }$$
decays exponentially in $\sqrt[3]{N}$.
Thus, substituting \eqref{eqn:LimEdgeProbConDeg} into Lemma~\ref{lem:ProbDisconn1}, we obtain for $\epsilon > 0$ that
\begin{equation*}
\limsup_{N\to \infty} \E[\psi_v(-\infty)\mid \deg(v)=s] \le e^\epsilon \Bigl( 1-f(s) \Bigr)^s.
\end{equation*}
As this bound is the same for all receivers $v\in V$ and holds for arbitrary $\epsilon>0$, the claim follows by letting $\epsilon \to 0$ and summing over receiver degrees with the limiting $\Pois(\meandeg)$ weights.
\end{IEEEproof}

\begin{cor}
\label{cor:RandRxGreedy}
When the out-degree distribution is deterministic or Poisson, $f(s)$ simplifies as follows. 
\EQ{
f(s) = \begin{cases} 
\frac{\bar{\Pi}_{s-2}^{\meandeg}-\bar{\Pi}_{s-1}^{\meandeg}}{\meandeg \pi_{s-1}},& D \equiv \meandeg,\\
 \frac{e^{-\meandeg \Pi_{s-2}}-e^{-\meandeg \Pi_{s-1}}}{\meandeg \pi_{s-1}},& D\sim \Pois(\meandeg).
\end{cases}
}
\end{cor}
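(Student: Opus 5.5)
The plan is to substitute the two special out-degree laws into the generating function $G_D$ appearing in \eqref{eqn:ConProbGreedy4} of Theorem~\ref{thm:RandRxGreedy}. That definition already expresses $f(s)$ entirely through $G_D$ evaluated at the Poisson tail values $\bar\Pi_{s-2}$ and $\bar\Pi_{s-1}$. So no new probabilistic argument is needed: the corollary reduces to computing $G_D$ in closed form for each case and simplifying.

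\emph{Deterministic case.} If $D\equiv\meandeg$ with $\meandeg\in\N$, then $G_D(z)=\E[z^{D}]=z^{\meandeg}$. Substituting $z=\bar\Pi_{s-2}$ and $z=\bar\Pi_{s-1}$ into \eqref{eqn:ConProbGreedy4} gives the first line directly.

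\emph{Poisson case.} If $D\sim\Pois(\meandeg)$, then
\begin{equation*}
G_D(z)=e^{-\meandeg(1-z)}.
\end{equation*}
Since $1-\bar\Pi_k=\Pi_k$, we get $G_D(\bar\Pi_{s-2})=e^{-\meandeg\Pi_{s-2}}$ and $G_D(\bar\Pi_{s-1})=e^{-\meandeg\Pi_{s-1}}$, which yields the second line.

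Two conventions need a brief check. For $s=1$, the index $s-2=-1$ appears; I would take $\Pi_{-1}=0$, so $\bar\Pi_{-1}=1$, consistent with the convention in Lemma~\ref{lem:ProbConnGreedy} that all residual degrees are at least $-1$. Then $G_D(1)=1$ in both formulas. The denominator $\meandeg\pi_{s-1}$ is positive for $s\ge1$ because $\meandeg>0$, so no division issue arises.

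The only mild subtlety concerns the Poisson case. Theorem~\ref{thm:RandRxGreedy} is stated for a fixed distribution of $D$, while Assumption~\ref{assum:DRG-IG} uses $\min\{D,N\}$. The truncation does not change $G_D$ in the limit, since $P\{D>N\}\to0$ and $G_D$ is continuous on $[0,1]$. I expect this truncation issue to be the only point requiring a remark; the rest is direct substitution.
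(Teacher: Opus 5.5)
Your proposal is correct and takes the same route as the paper. The paper gives no separate proof, because the corollary follows by substituting $G_D(z)=z^{\meandeg}$ and $G_D(z)=e^{-\meandeg(1-z)}$ (using $1-\bar\Pi_k=\Pi_k$) into \eqref{eqn:ConProbGreedy4}. Your two added remarks are reasonable details the paper leaves implicit: the convention $\bar\Pi_{-1}=1$ for $s=1$, which holds because residual degrees are nonnegative (you phrased this as "at least $-1$"; the point is that they exceed $-1$), and the effect of the truncation $\min\{D,N\}$.
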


\section{Numerical evaluations}
\label{sec:numerical}

We briefly recall key terminology.
The input to the matching problem is a bipartite \emph{feasible graph}, whose edges $(u,v)$ denote which senders $u\in U$ have messages for which receivers $v\in V$.
The feasible graph may be subjected to thinning (Definition \ref{defn:Thinning}) to obtain an \emph{intention graph}, which is the input to the DB$(\alpha)$ algorithm.

In Section \ref{sec:Analysis}, we assumed that the intention graph is a $D$-out random graph, i.e., that sender degrees are \iid and, conditional on its degree, each sender chooses its neighbors independently and uniformly at random. 
Under this assumption, we derived an expression for the mean matching size under the DB$(0)$ algorithm and a lower bound under the DB$(-\infty)$ algorithm,
in Theorems \ref{thm:RandRxUnif} and \ref{thm:RandRxGreedy} respectively. 
We simplified these results, obtaining closed form results in Corollaries \ref{cor:RandRxUnif} and \ref{cor:RandRxGreedy}, for sender degree $D$ that were either deterministic or had a Binomial distribution.
We now compare these analytical results with empirical results from numerical simulations.

All our simulations are carried out for a system with an equal number $N=144$ of senders and receivers. 
Edges are generated according to the $D$-out random graph model to obtain a feasible graph, which may (or may not) be thinned to obtain the intention graph. 
The DB($\alpha$) algorithm is applied to the intention graph to obtain a matching. 
Each plot in this section is based on $1{,}000$ replicates.

In Fig.~\ref{fig:03}, we depict results for DB($0$) and DB($-\infty$) algorithms applied to bipartite Erd\H{o}s-R\'enyi intention graphs, which have $\Bin(N,p)$ out-degree distributions. 
We plot the fraction of nodes that are successfully matched against the mean degree, $\meandeg = Np$. 
We plot the theoretical prediction from Corollaries \ref{cor:RandRxUnif} and \ref{cor:RandRxGreedy} as well as the mean and first and third quartiles from $1{,}000$ simulations. 

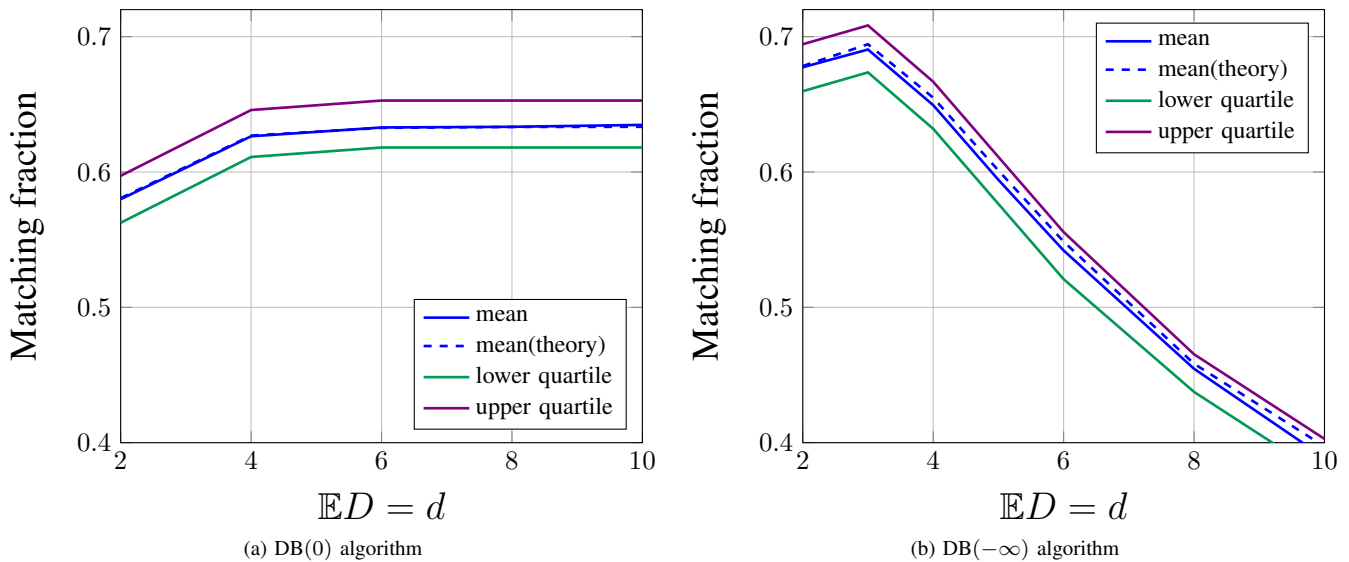
\begin{figure}[ht]
\centering
\subfloat[DB$(0)$ algorithm\label{fig:03a}]{\resizebox{0.5\linewidth}{!}{\pgfplotsset{every axis/.append style={label style={font=\Large}}}
\begin{tikzpicture}

\begin{axis}[
xlabel={ $\meandeg = d$},
ylabel={Matching fraction},
xmajorgrids,
ymajorgrids,
ymin= 0.4, ymax=0.72,
xmin=2, xmax=10,
legend entries={mean, mean(theory),lower quartile, upper quartile
},
legend style = {legend pos = south east, nodes=right, font=\small},
title = { }
]

 \addplot[
 color= blue,
 line width=1pt,
 every mark/.append style={solid},
 ]
 coordinates {
 (2, 0.5799)
(4, 0.6264)
(6, 0.6329)
(8, 0.6334)
(10, 0.6348)
  };			

 \addplot[
 color= blue,
dashed,
 line width=1pt,
 every mark/.append style={solid},
 ]
 coordinates {
 (2, 0.5807)
(4, 0.6269)
(6, 0.6326)
(8, 0.6333)
(10, 0.6334)
  };				
 
\addplot[
color= ForestGreen,
line width=1pt,
every mark/.append style={solid},
]
coordinates {
(2, 0.5625)
(4, 0.6111)
(6, 0.6181)
(8, 0.6181)
(10, 0.6181)
 };

\addplot[
color= violet,
line width=1pt,
every mark/.append style={solid},
]
coordinates {
(2, 0.5972)
(4, 0.6458)
(6, 0.6528)
(8, 0.6528)
(10, 0.6528)
 };

\end{axis}
\end{tikzpicture}}}
\subfloat[DB$(-\infty)$ algorithm\label{fig:03b}]{\resizebox{0.5\linewidth}{!}{\pgfplotsset{every axis/.append style={label style={font=\Large}}}
\begin{tikzpicture}

\begin{axis}[
xlabel={ $\meandeg=d$},
ylabel={Matching fraction},
xmajorgrids,
ymajorgrids,
ymin=.4, ymax=.72,
xmin=2, xmax=10,
legend entries={mean, mean(theory), lower quartile, upper quartile
},
legend style = {legend pos = north east, nodes=right, font=\small},
title = { }
]

 \addplot[
 color= blue,
 line width=1pt,
 every mark/.append style={solid},
 ]
 coordinates {
 (2, 0.6774)
(3, 0.6905)
(4, 0.6494)
(5, 0.5942)
(6, 0.5420)
(8, 0.4546)
(10, 0.3889)
  };			

 \addplot[
 color= blue,
dashed,
 line width=1pt,
 every mark/.append style={solid},
 ]
 coordinates {
 (2, 0.6783)
(3, 0.6944)
(4, 0.6549)
(5, 0.6011)
(6, 0.5486)
(8, 0.4583)
(10, 0.3972)
  };				
 
\addplot[
color= ForestGreen,
line width=1pt,
every mark/.append style={solid},
]
coordinates {
(2, 0.6597)
(3, 0.6736)
(4, 0.6319)
(5, 0.5764)
(6, 0.5208)
(8, 0.4375)
(10, 0.3750)
 };

\addplot[
color= violet,
line width=1pt,
every mark/.append style={solid},
]
coordinates {
(2, 0.6944)
(3, 0.7083)
(4, 0.6667)
(5, 0.6111)
(6, 0.5556)
(8, 0.4653)
(10, 0.4028)
 };

\end{axis}
\end{tikzpicture}}}
\caption{
Fraction of nodes matched versus mean degree for $D$-out random graphs with $D\sim \Bin(N,d/N)$.
Theoretical predictions closely track the empirical mean, and the tight inter-quartile range over $1{,}000$ runs at $N=144$ confirms good prediction.
}
\label{fig:03}
\end{figure}
For DB$(0)$, the theoretical predictions from Corollary~\ref{cor:RandRxUnif} are exact for finite $N$ and match the empirical results up to simulation noise.
For DB$(-\infty)$, the lower bound from Corollary~\ref{cor:RandRxGreedy}, derived in the limiting regime $N\to\infty$, already provides an excellent approximation at $N=144$.
Furthermore, the inter-quartile range is small, i.e., the empirical matching fraction concentrates close to its mean. 
This is of great practical relevance, as it shows that the mean performance prediction can (nearly) be treated as a performance guarantee that will hold across most random problem instances. 
These findings are echoed in Fig. \ref{fig:04}, which depicts results for random bipartite graphs with deterministic sender degrees. 
Again, we plot the fraction of matched nodes against mean node degree, and compare the theoretical prediction from Corollaries \ref{cor:RandRxUnif} and \ref{cor:RandRxGreedy} with the empirical mean and first and third quartiles from $1{,}000$ simulations.

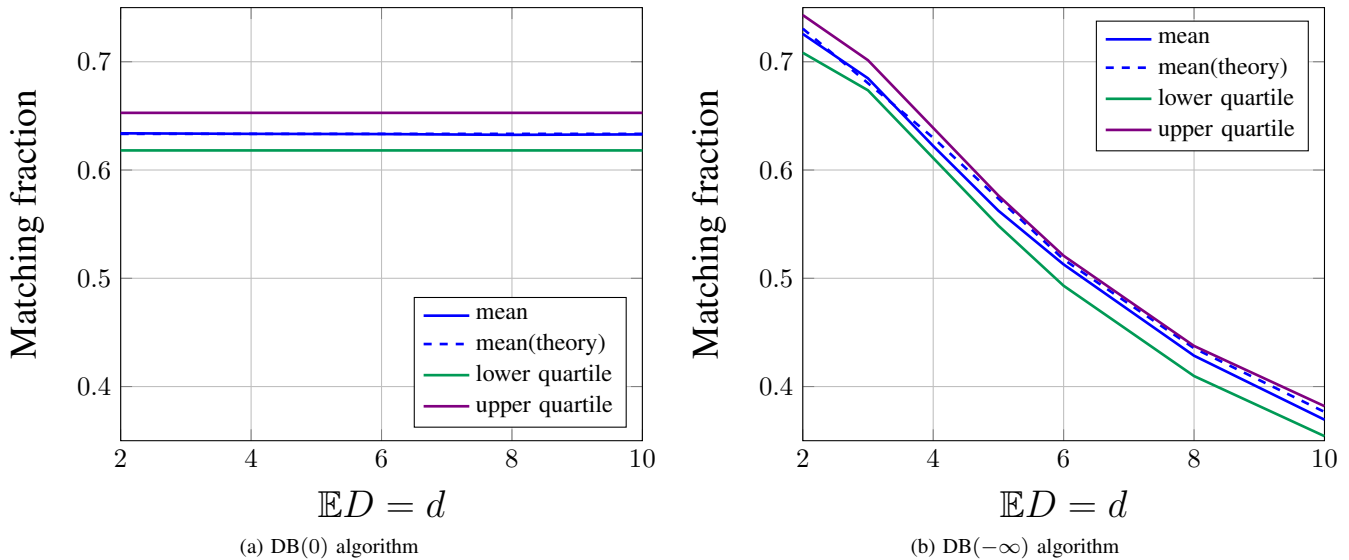
\begin{figure}[ht]
\centering
\subfloat[DB$(0)$ algorithm \label{fig:04a}]{\resizebox{0.5\linewidth}{!}{\pgfplotsset{every axis/.append style={label style={font=\Large}}}
\begin{tikzpicture}

\begin{axis}[
xlabel={ $\meandeg = d$},
ylabel={Matching fraction},
xmajorgrids,
ymajorgrids,
ymin= 0.35,
ymax=0.75,
xmin=2,
xmax=10,
,
legend entries={mean, mean(theory),lower quartile, upper quartile
},
legend style = {legend pos = south east, nodes=right, font=\small},
title = { }
]

 \addplot[
 color= blue,
 line width=1pt,
 every mark/.append style={solid},
 ]
 coordinates {
 (2, 0.6339)
(4, 0.6334)
(6, 0.6332)
(8, 0.6325)
(10, 0.6329)
  };			

 \addplot[
 color= blue,
dashed,
 line width=1pt,
 every mark/.append style={solid},
 ]
 coordinates {
 (2, 0.6334)
(4, 0.6334)
(6, 0.6334)
(8, 0.6334)
(10, 0.6334)
  };				
 
\addplot[
color= ForestGreen,
line width=1pt,
every mark/.append style={solid},
]
coordinates {
(2, 0.6181)
(4, 0.6181)
(6, 0.6181)
(8, 0.6181)
(10, 0.6181)
 };

\addplot[
color= violet,
line width=1pt,
every mark/.append style={solid},
]
coordinates {
(2, 0.6528)
(4, 0.6528)
(6, 0.6528)
(8, 0.6528)
(10, 0.6528)
 };

\end{axis}
\end{tikzpicture}}}
\subfloat[DB$(-\infty)$ algorithm\label{fig:04b}]{\resizebox{0.5\linewidth}{!}{\pgfplotsset{every axis/.append style={label style={font=\Large}}}
\begin{tikzpicture}

\begin{axis}[
xlabel={ $\meandeg=d$},
ylabel={Matching fraction},
xmajorgrids,
ymajorgrids,
ymin= 0.35,
ymax=0.75,
xmin=2,
xmax=10,
legend entries={mean, mean(theory),lower quartile, upper quartile
},
legend style = {legend pos = north east, nodes=right, font=\small},
title = { }
]

 \addplot[
 color= blue,
 line width=1pt,
 every mark/.append style={solid},
 ]
 coordinates {
 (2, 0.7257)
(3, 0.6847)
(4, 0.6222)
(5, 0.5624)
(6, 0.5128)
(8, 0.4285)
(10, 0.3694)
  };			

 \addplot[
 color= blue,
dashed,
 line width=1pt,
 every mark/.append style={solid},
 ]
 coordinates {
 (2, 0.7306)
(4, 0.6298)
(6, 0.5174)
(8, 0.4354)
(10, 0.3766)
  };				
 
\addplot[
color= ForestGreen,
line width=1pt,
every mark/.append style={solid},
]
coordinates {
(2, 0.7083)
(3, 0.6736)
(4, 0.6111)
(5, 0.5486)
(6, 0.4931)
(8, 0.4097)
(10, 0.3542)
 };

\addplot[
color= violet,
line width=1pt,
every mark/.append style={solid},
]
coordinates {
(2, 0.7431)
(3, 0.7014)
(4, 0.6389)
(5, 0.5764)
(6, 0.5208)
(8, 0.4375)
(10, 0.3819)
 };

\end{axis}
\end{tikzpicture}}}
\caption{
Fraction of nodes matched versus mean degree for $D$-out random graphs with deterministic $D=d$.
The empirical mean over $1{,}000$ runs at $N=144$ agrees well with theory, and the narrow inter-quartile range attests to good concentration.
}
\label{fig:04}
\end{figure}

Observe from Fig.~\ref{fig:03} that, while the matching fraction increases slightly with mean degree before saturating when $\alpha=0$, it decreases with mean degree when $\alpha=-\infty$, except for a brief initial increase. 
This difference is more pronounced for random graphs with deterministic sender degrees, shown in Fig.~\ref{fig:04}.
Here, the matching fraction under DB$(0)$ remains insensitive to the out-degree, consistent with Corollary~\ref{cor:RandRxUnif}, while it decreases with out-degree under DB$(-\infty)$.

These observations are explained as follows.
Under DB$(-\infty)$, each sender restricts its grant to a neighbor of minimum degree.
As the mean sender degree increases, the receiver degree distribution concentrates at higher values, so the few receivers with the smallest degrees attract grants from disproportionately many senders, leading to collisions.
Under DB$(0)$, grants are spread uniformly across all neighbors, avoiding this bottleneck but forgoing the advantage of targeting under-demanded receivers at low mean degrees. 
This motivates us to explore the optimal choice of the DB algorithm parameter, $\alpha$.

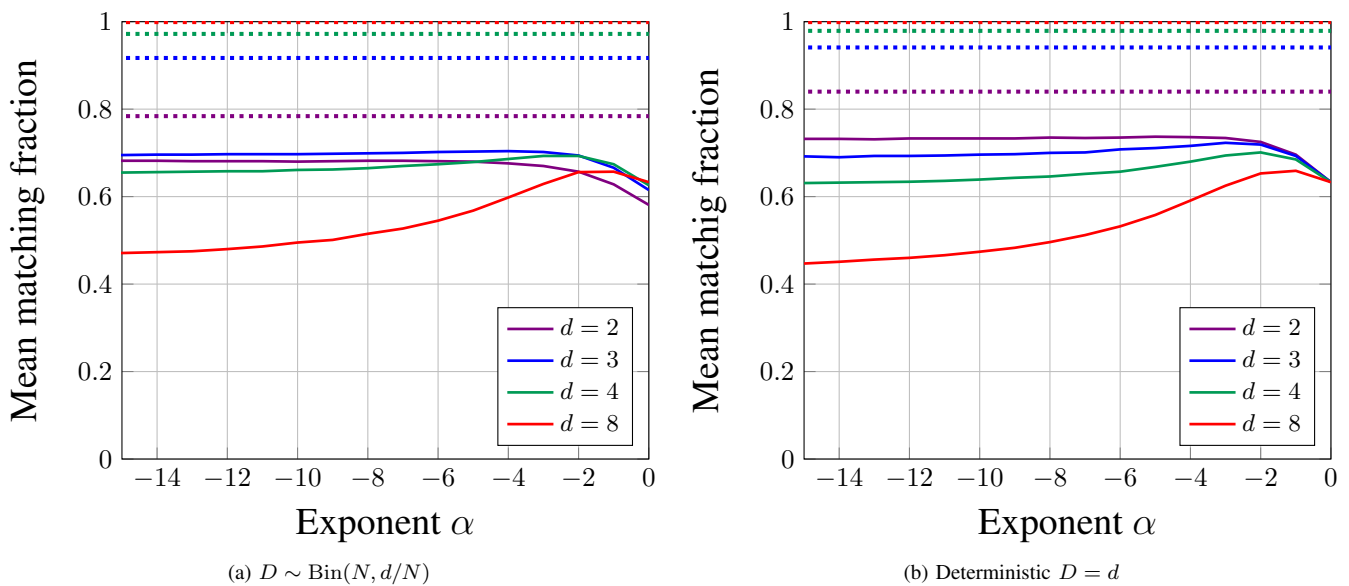
\begin{figure}[ht]
\centering
\subfloat[$D\sim \Bin(N,d/N)$ \label{fig:05a}]{\resizebox{0.5\linewidth}{!}{\pgfplotsset{every axis/.append style={label style={font=\Large}}}
\begin{tikzpicture}

\begin{axis}[
xlabel={Exponent $\alpha$},
ylabel={Mean matching fraction},
xmajorgrids,
ymajorgrids,
ymin= 0,
ymax=1,
xmin=-15,
xmax=0,
legend entries={ $d =2$, ,$d =3$, ,$d=4$,,$d =8$,
},
legend style = {legend pos = south east, nodes=right, font=\small},
title = { }
]

\addplot[
color= violet,
line width=1pt,
every mark/.append style={solid},
]
coordinates {
(-20, 0.682)
(-19, 0.682)
(-18, 0.681)
(-17, 0.680)
(-16, 0.681)
(-15, 0.682)
(-14, 0.682)
(-13, 0.681)
(-12, 0.681)
(-11, 0.681)
(-10, 0.680)
(-9, 0.681)
(-8, 0.682)
(-7, 0.682)
(-6, 0.681)
(-5, 0.680)
(-4, 0.676)
(-3, 0.670)
(-2, 0.657)
(-1, 0.628)
(0, 0.581)
};			

\addplot[
color= violet,
dotted,
line width=1.5pt,
every mark/.append style={solid},
]
coordinates {
(-20, 0.784)
(-19, 0.784)
(-18, 0.784)
(-17, 0.784)
(-16, 0.784)
(-15, 0.784)
(-14, 0.784)
(-13, 0.784)
(-12, 0.784)
(-11, 0.784)
(-10, 0.784)
(-9, 0.784)
(-8, 0.784)
(-7, 0.784)
(-6, 0.784)
(-5, 0.784)
(-4, 0.784)
(-3, 0.784)
(-2, 0.784)
(-1, 0.784)
(0, 0.784)
};

\addplot[
color= blue,
line width=1pt,
every mark/.append style={solid},
]
coordinates {
(-20, 0.695)
(-19, 0.695)
(-18, 0.696)
(-17, 0.694)
(-16, 0.695)
(-15, 0.695)
(-14, 0.696)
(-13, 0.696)
(-12, 0.697)
(-11, 0.697)
(-10, 0.697)
(-9, 0.698)
(-8, 0.699)
(-7, 0.700)
(-6, 0.702)
(-5, 0.703)
(-4, 0.704)
(-3, 0.702)
(-2, 0.694)
(-1, 0.666)
(0, 0.615)
 };

\addplot[
color= blue,
dotted,
line width=1.5pt,
every mark/.append style={solid},
]
coordinates {
(-20, 0.917)
(-19, 0.917)
(-18, 0.917)
(-17, 0.917)
(-16, 0.917)
(-15, 0.917)
(-14, 0.917)
(-13, 0.917)
(-12, 0.917)
(-11, 0.917)
(-10, 0.917)
(-9, 0.917)
(-8, 0.917)
(-7, 0.917)
(-6, 0.917)
(-5, 0.917)
(-4, 0.917)
(-3, 0.917)
(-2, 0.917)
(-1, 0.917)
(0, 0.917)
};

\addplot[
color=ForestGreen,
line width=1pt,
every mark/.append style={solid},
]
coordinates {
(-20, 0.654)
(-19, 0.655)
(-18, 0.655)
(-17, 0.655)
(-16, 0.656)
(-15, 0.655)
(-14, 0.656)
(-13, 0.657)
(-12, 0.658)
(-11, 0.658)
(-10, 0.661)
(-9, 0.662)
(-8, 0.665)
(-7, 0.670)
(-6, 0.674)
(-5, 0.679)
(-4, 0.686)
(-3, 0.693)
(-2, 0.693)
(-1, 0.674)
(0, 0.626)
 };

\addplot[
color=ForestGreen,
dotted,
line width=1.5pt,
every mark/.append style={solid},
]
coordinates {
(-20, 0.972)
(-19, 0.972)
(-18, 0.972)
(-17, 0.972)
(-16, 0.972)
(-15, 0.972)
(-14, 0.972)
(-13, 0.972)
(-12, 0.972)
(-11, 0.972)
(-10, 0.972)
(-9, 0.972)
(-8, 0.972)
(-7, 0.972)
(-6, 0.972)
(-5, 0.972)
(-4, 0.972)
(-3, 0.972)
(-2, 0.972)
(-1, 0.972)
(0, 0.972)
 };

 \addplot[
color= red,
line width=1pt,
every mark/.append style={solid},
]
coordinates {
(-20, 0.462)
(-19, 0.464)
(-18, 0.464)
(-17, 0.467)
(-16, 0.468)
(-15, 0.471)
(-14, 0.473)
(-13, 0.475)
(-12, 0.480)
(-11, 0.486)
(-10, 0.495)
(-9, 0.501)
(-8, 0.515)
(-7, 0.527)
(-6, 0.545)
(-5, 0.568)
(-4, 0.598)
(-3, 0.629)
(-2, 0.656)
(-1, 0.657)
(0, 0.633)
 };

 \addplot[
color= red,
dotted,
line width=1.5pt,
every mark/.append style={solid},
]
coordinates {
(-20, 0.999)
(-19, 0.999)
(-18, 0.999)
(-17, 0.999)
(-16, 0.999)
(-15, 0.999)
(-14, 0.999)
(-13, 0.999)
(-12, 0.999)
(-11, 0.999)
(-10, 0.999)
(-9, 0.999)
(-8, 0.999)
(-7, 0.999)
(-6, 0.999)
(-5, 0.999)
(-4, 0.999)
(-3, 0.999)
(-2, 0.999)
(-1, 0.999)
(0, 0.999)
 };

\end{axis}
\end{tikzpicture}}}
\subfloat[Deterministic $D = d$ \label{fig:05b}]{\resizebox{0.5\linewidth}{!}{\pgfplotsset{every axis/.append style={label style={font=\Large}}}
\begin{tikzpicture}
\begin{axis}[
xlabel={Exponent $\alpha$},
ylabel={Mean matchig fraction},
xmajorgrids,
ymajorgrids,
ymin=0,
ymax=1,
xmin=-15,
xmax=0,
legend entries={$d=2$,,$d=3$,,$d=4$, ,$d=8$,
},
legend style = {legend pos =  south east, nodes=right, font=\small},
title = { }
]

\addplot[
color= violet,
line width=1pt,
every mark/.append style={solid},
]
coordinates {
(-20, 0.730)
(-19, 0.731)
(-18, 0.731)
(-17, 0.732)
(-16, 0.731)
(-15, 0.732)
(-14, 0.732)
(-13, 0.731)
(-12, 0.733)
(-11, 0.733)
(-10, 0.733)
(-9, 0.733)
(-8, 0.735)
(-7, 0.734)
(-6, 0.735)
(-5, 0.737)
(-4, 0.736)
(-3, 0.734)
(-2, 0.725)
(-1, 0.696)
(0, 0.633)
 };

\addplot[
color= violet,
dotted,
line width=1.5pt,
every mark/.append style={solid},
]
coordinates {
(-20, 0.840)
(-19, 0.840)
(-18, 0.840)
(-17, 0.840)
(-16, 0.840)
(-15, 0.840)
(-14, 0.840)
(-13, 0.840)
(-12, 0.840)
(-11, 0.840)
(-10, 0.840)
(-9, 0.840)
(-8, 0.840)
(-7, 0.840)
(-6, 0.840)
(-5, 0.840)
(-4, 0.840)
(-3, 0.840)
(-2, 0.840)
(-1, 0.840)
(0, 0.840)
};

\addplot[
color= blue,
line width=1pt,
every mark/.append style={solid},
]
coordinates {
(-20, 0.691)
(-19, 0.690)
(-18, 0.691)
(-17, 0.690)
(-16, 0.693)
(-15, 0.692)
(-14, 0.690)
(-13, 0.693)
(-12, 0.693)
(-11, 0.694)
(-10, 0.696)
(-9, 0.697)
(-8, 0.700)
(-7, 0.701)
(-6, 0.708)
(-5, 0.711)
(-4, 0.716)
(-3, 0.723)
(-2, 0.719)
(-1, 0.693)
(0, 0.633)
 };

\addplot[
color= blue,
dotted,
line width=1.5pt,
every mark/.append style={solid},
]
coordinates {
(-20, 0.941)
(-19, 0.941)
(-18, 0.941)
(-17, 0.941)
(-16, 0.941)
(-15, 0.941)
(-14, 0.941)
(-13, 0.941)
(-12, 0.941)
(-11, 0.941)
(-10, 0.941)
(-9, 0.941)
(-8, 0.941)
(-7, 0.941)
(-6, 0.941)
(-5, 0.941)
(-4, 0.941)
(-3, 0.941)
(-2, 0.941)
(-1, 0.941)
(0, 0.941)
 };

 \addplot[
color=ForestGreen,
line width=1pt,
every mark/.append style={solid},
]
coordinates {
(-20, 0.627)
(-19, 0.628)
(-18, 0.630)
(-17, 0.629)
(-16, 0.631)
(-15, 0.631)
(-14, 0.632)
(-13, 0.633)
(-12, 0.634)
(-11, 0.636)
(-10, 0.639)
(-9, 0.643)
(-8, 0.646)
(-7, 0.652)
(-6, 0.657)
(-5, 0.668)
(-4, 0.680)
(-3, 0.694)
(-2, 0.701)
(-1, 0.685)
(0, 0.633)
 };

 \addplot[
color=ForestGreen,
dotted,
line width=1.5pt,
every mark/.append style={solid},
]
coordinates {
(-20, 0.979)
(-19, 0.979)
(-18, 0.979)
(-17, 0.979)
(-16, 0.979)
(-15, 0.979)
(-14, 0.979)
(-13, 0.979)
(-12, 0.979)
(-11, 0.979)
(-10, 0.979)
(-9, 0.979)
(-8, 0.979)
(-7, 0.979)
(-6, 0.979)
(-5, 0.979)
(-4, 0.979)
(-3, 0.979)
(-2, 0.979)
(-1, 0.979)
(0, 0.979)
 };

 \addplot[
color= red,
line width=1pt,
every mark/.append style={solid},
]
coordinates {
(-20, 0.439)
(-19, 0.440)
(-18, 0.442)
(-17, 0.443)
(-16, 0.444)
(-15, 0.447)
(-14, 0.451)
(-13, 0.456)
(-12, 0.460)
(-11, 0.466)
(-10, 0.474)
(-9, 0.483)
(-8, 0.496)
(-7, 0.512)
(-6, 0.532)
(-5, 0.558)
(-4, 0.591)
(-3, 0.625)
(-2, 0.653)
(-1, 0.659)
(0, 0.633)
 };	

 \addplot[
color= red,
dotted,
line width=1.5pt,
every mark/.append style={solid},
]
coordinates {
(-20, 0.999)
(-19, 0.999)
(-18, 0.999)
(-17, 0.999)
(-16, 0.999)
(-15, 0.999)
(-14, 0.999)
(-13, 0.999)
(-12, 0.999)
(-11, 0.999)
(-10, 0.999)
(-9, 0.999)
(-8, 0.999)
(-7, 0.999)
(-6, 0.999)
(-5, 0.999)
(-4, 0.999)
(-3, 0.999)
(-2, 0.999)
(-1, 0.999)
(0, 0.999)
};
 
\end{axis}
\end{tikzpicture}}}
\caption{Mean matching fraction versus exponent $\alpha$ in $D$-out random graphs. 
Under degree-biased selection, higher mean sender degree can counterintuitively reduce matching performance.
}
\label{fig:05}
\end{figure}

In Fig. \ref{fig:05}, we plot the mean matching fraction as a function of exponent $\alpha \in (-\infty, 0]$, for mean sender degrees $\meandeg \in \set{2,3,4,8}$. 
Fig. \ref{fig:05a} shows results for Erd\H{o}s-R\'enyi random graphs, and Fig.~\ref{fig:05b} for random graphs with deterministic sender degrees. 
We also plot (dotted lines) the mean fraction of nodes matched in the maximum matching in the simulated random graphs, as omniscient upper bounds. 
We see that this quantity increases rapidly with the mean degree; for $\meandeg\ge 4$, the maximum matching is close to perfect. 
This shows that there is a gap between our algorithm and the best centralized algorithm. 
It is an open question whether there is \emph{any} single-round decentralized algorithm that can close this gap.

Next, note that as $\alpha$ becomes more negative, the mean matching size increases for small mean degrees, but first increases and then decreases for large mean degrees. 
These findings are consistent with Fig.~\ref{fig:03} and Fig.~\ref{fig:04}, and highlight that low-degree receivers become bottlenecks if both $-\alpha$ \emph{and} the mean degree are large. 

This observation points to an algorithmic opportunity to improve performance through the preemptive sparsification of the \emph{feasible} communication graph to obtain an \emph{intention} graph. 
Such sparsification yields the added benefit of reducing the number of control messages needed. 
The following are two natural sparsification mechanisms: 
(a) \emph{$\Bern(q)$ thinning}, where each edge is retained with probability $q$, 
independent of other edges, and 
(b) \emph{$\max(k)$ thinning}, where each sender with more than $k$ edges retains exactly $k$ of them, chosen uniformly at random, while senders with $k$ or fewer edges retain all of them. 

\begin{figure}[ht]
\centering
\subfloat[Binomial $D$ with $\max(k)$\label{fig:06a}]{\resizebox{0.48\linewidth}{!}{\pgfplotsset{every axis/.append style={label style={font=\Large}}}
\begin{tikzpicture}

\begin{axis}[
xlabel={Exponent $\alpha$},
ylabel={Mean matching fraction},
xmajorgrids,
ymajorgrids,
ymin=0.5,
ymax=0.75,
xmin=-15,
xmax=0,
legend entries={$(d{,}k) = (4{,}2)$, $(d{,}k) = (4{,}3)$, $(d{,}k) = (4{,}4)$, $(d{,}k) = (8{,}2)$, $(d{,}k) = (8{,}3)$ ,$(d{,}k) = (8{,}4)$,},
legend style = {   
legend pos= south east,
legend cell align=left,
},
title = { }
]

\addplot[
color=ForestGreen,
line width=1pt,
every mark/.append style={solid},
]
coordinates {
(-20, 0.728)
(-19, 0.728)
(-18, 0.727)
(-17, 0.727)
(-16, 0.727)
(-15, 0.728)
(-14, 0.729)
(-13, 0.728)
(-12, 0.727)
(-11, 0.730)
(-10, 0.729)
(-9, 0.729)
(-8, 0.729)
(-7, 0.731)
(-6, 0.730)
(-5, 0.731)
(-4, 0.729)
(-3, 0.726)
(-2, 0.715)
(-1, 0.685)
(0, 0.627)
};

\addplot[
color=ForestGreen,
dashed,
line width=1pt,
every mark/.append style={solid},
]
coordinates {
(-15, 0.710)
(-14, 0.710)
(-13, 0.710)
(-12, 0.711)
(-11, 0.712)
(-10, 0.712)
(-9, 0.713)
(-8, 0.715)
(-7, 0.716)
(-6, 0.718)
(-5, 0.721)
(-4, 0.723)
(-3, 0.723)
(-2, 0.716)
(-1, 0.687)
(0, 0.628)
};

  \addplot[
color=ForestGreen,
dotted,
line width=1.5pt,
every mark/.append style={solid},
]
coordinates {
(-15, 0.686)
(-14, 0.686)
(-13, 0.687)
(-12, 0.687)
(-11, 0.688)
(-10, 0.689)
(-9, 0.691)
(-8, 0.693)
(-7, 0.695)
(-6, 0.699)
(-5, 0.703)
(-4, 0.707)
(-3, 0.711)
(-2, 0.707)
(-1, 0.683)
(0, 0.628)
 };

 \addplot[
color= red, 
line width=1pt,
every mark/.append style={solid},
]
coordinates {
(-20, 0.731)
(-19, 0.732)
(-18, 0.732)
(-17, 0.731)
(-16, 0.731)
(-15, 0.732)
(-14, 0.731)
(-13, 0.732)
(-12, 0.732)
(-11, 0.733)
(-10, 0.733)
(-9, 0.734)
(-8, 0.734)
(-7, 0.736)
(-6, 0.736)
(-5, 0.736)
(-4, 0.737)
(-3, 0.734)
(-2, 0.724)
(-1, 0.695)
(0, 0.633)
 };

  \addplot[
color= red, 
dashed,
line width=1pt,
every mark/.append style={solid},
]
coordinates {
(-15, 0.693)
(-14, 0.693)
(-13, 0.693)
(-12, 0.694)
(-11, 0.695)
(-10, 0.697)
(-9, 0.698)
(-8, 0.700)
(-7, 0.703)
(-6, 0.707)
(-5, 0.712)
(-4, 0.717)
(-3, 0.722)
(-2, 0.719)
(-1, 0.694)
(0, 0.633)
 };

   \addplot[
 color= red, 
 dotted,
 line width=1.5pt,
 every mark/.append style={solid},
 ]
 coordinates {
(-15, 0.635)
(-14, 0.636)
(-13, 0.637)
(-12, 0.639)
(-11, 0.640)
(-10, 0.643)
(-9, 0.646)
(-8, 0.650)
(-7, 0.655)
(-6, 0.662)
(-5, 0.671)
(-4, 0.683)
(-3, 0.695)
(-2, 0.702)
(-1, 0.685)
(0, 0.633)
  };

\end{axis}
\end{tikzpicture}}}
\hfill
\subfloat[Deterministic $D$ with $\Bern(k/d)$\label{fig:06b}]{\resizebox{0.48\linewidth}{!}{\pgfplotsset{every axis/.append style={label style={font=\Large}}}
\begin{tikzpicture}

\begin{axis}[
xlabel={Exponent $\alpha$},
ylabel={Mean matching fraction},
xmajorgrids,
ymajorgrids,
ymin=0.5,
ymax=0.75,
xmin=-15,
xmax=0,
legend entries={$(d{,}k) = (4{,}2)$, $(d{,}k) = (4{,}3)$, $(d{,}k) = (4{,}4)$, $(d{,}k) = (8{,}2)$, $(d{,}k) = (8{,}3)$ ,$(d{,}k) = (8{,}4)$,},
legend style = {   legend pos=south east,
  legend cell align=left,
  font=\small},
title = { }
]

\addplot[
color=ForestGreen,
line width=1pt,
every mark/.append style={solid},
]
coordinates {
(-15, 0.7137598055555557)
(-14, 0.7137937777777778)
(-13, 0.7138407777777778)
(-12, 0.7138953055555555)
(-11, 0.7139654166666665)
(-10, 0.7140501388888889)
(-9, 0.7141307708333335)
(-8, 0.7141887638888889)
(-7, 0.7141464861111112)
(-6, 0.713857375)
(-5, 0.7129656180555555)
(-4, 0.710638625)
(-3, 0.7049797013888889)
(-2, 0.6915815416666666)
(-1, 0.6618358680555555)
(0, 0.609909625)
};

\addplot[
color=ForestGreen,
dashed,
line width=1pt,
every mark/.append style={solid},
]
coordinates {
(-15, 0.7014197638888889)
(-14, 0.701771638888889)
(-13, 0.7022271249999999)
(-12, 0.7028038055555555)
(-11, 0.7035638125)
(-10, 0.7045456666666667)
(-9, 0.7058154791666667)
(-8, 0.7074798263888889)
(-7, 0.7096326319444445)
(-6, 0.7124172569444444)
(-5, 0.7158013680555556)
(-4, 0.7194435972222222)
(-3, 0.7215795555555556)
(-2, 0.7163759375)
(-1, 0.6902482291666666)
(0, 0.6322893263888889)
};

  \addplot[
color=ForestGreen, 
dotted,
line width=1.5pt,
every mark/.append style={solid},
]
coordinates {
(-15, 0.63011375)
(-14, 0.6310480833333334)
(-13, 0.6322285138888889)
(-12, 0.6337209513888889)
(-11, 0.6356239166666667)
(-10, 0.6381001111111111)
(-9, 0.6412925416666667)
(-8, 0.6454705902777778)
(-7, 0.6509555694444445)
(-6, 0.6582200625)
(-5, 0.6677758333333334)
(-4, 0.6798337638888889)
(-3, 0.6930398472222222)
(-2, 0.7006487222222223)
(-1, 0.6852169375)
(0, 0.6336616736111111)
 };

 \addplot[
color= red, 
line width=1pt,
every mark/.append style={solid},
]
coordinates {
(-15, 0.696871826388889)
(-14, 0.6968962361111111)
(-13, 0.69692425)
(-12, 0.6969597152777779)
(-11, 0.696996576388889)
(-10, 0.6970327083333333)
(-9, 0.6970633680555556)
(-8, 0.6970577152777778)
(-7, 0.6969182222222222)
(-6, 0.6966356388888888)
(-5, 0.6955453819444444)
(-4, 0.6929180416666666)
(-3, 0.686887826388889)
(-2, 0.67312925)
(-1, 0.6439052569444443)
(0, 0.5946849722222223)
 };

  \addplot[
color= red, 
dashed,
line width=1pt,
every mark/.append style={solid},
]
coordinates {
(-15, 0.7022277083333333)
(-14, 0.7024754513888889)
(-13, 0.7027885833333333)
(-12, 0.7032092361111111)
(-11, 0.7037560416666666)
(-10, 0.704464138888889)
(-9, 0.7053813819444444)
(-8, 0.7065684236111111)
(-7, 0.7080861736111112)
(-6, 0.7100016736111111)
(-5, 0.7121729097222222)
(-4, 0.7141275208333334)
(-3, 0.7139946666666668)
(-2, 0.7062472638888888)
(-1, 0.6790997986111111)
(0, 0.6247125277777779)
 };

   \addplot[
 color= red, 
 dotted,
 line width=1.5pt,
 every mark/.append style={solid},
 ]
 coordinates {
(-15, 0.6493987986111111)
(-14, 0.6500817430555555)
(-13, 0.6509588402777777)
(-12, 0.6520745763888889)
(-11, 0.6535172291666666)
(-10, 0.6553763263888889)
(-9, 0.6577970972222222)
(-8, 0.6609812013888889)
(-7, 0.665197923611111)
(-6, 0.670780611111111)
(-5, 0.6780988680555555)
(-4, 0.6871457291666667)
(-3, 0.696373534722222)
(-2, 0.6994581736111111)
(-1, 0.6814142083333334)
(0, 0.6320930902777777)
  };

\end{axis}
\end{tikzpicture}}}
\caption{Mean matching fraction versus exponent $\alpha$ in $D$-out random graphs having $\meandeg = d$ with thinning for $k \in \set{2,3,4}$.
Preemptive thinning to low out-degree consistently improves decentralized matching performance, with $\max(2)$ yielding the strongest results.
}
\label{fig:06}
\end{figure}
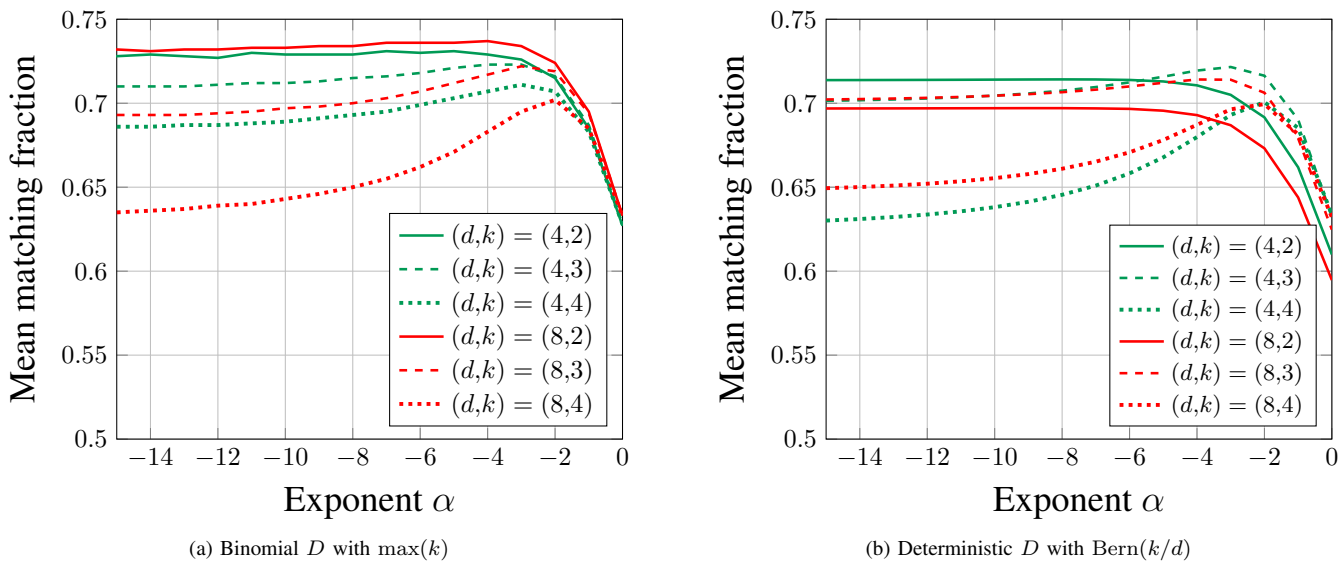

We have depicted the effect of thinning on the size of the matching obtained by the DB($\alpha$) algorithm in Fig.~\ref{fig:06}, for different values of $\alpha$, the mean degree $\meandeg$, and the thinning parameter. 
Fig.~\ref{fig:06a} shows results for max($k$) thinning of Erd\H{o}s-R\'enyi random graphs and Fig.~\ref{fig:06b} for Bernoulli thinning of random graphs with deterministic out-degrees. 
Note that $\Bern(q)$ thinning of a $G(n,n,p)$ graph yields a $G(n,n,pq)$ graph, while $\max(k)$ thinning of a $d$-out random graph yields a $\min(d,k)$-out random graph.
Since these cases reduce to previously considered graph families, we omit the corresponding plots. 
We see from the figure that $\max(2)$ thinning yields the best performance across a range of parameter values. 
An intuitive explanation is that, as seen in Fig.~\ref{fig:05}, the $2$-out random graph achieves the best performance in the class of random graphs with deterministic out-degrees, and $\max(2)$ thinning yields intention graphs that approximate $2$-out random graphs, especially when applied to dense feasible graphs. 

In light of the above, we now compare the performance of the DB($-\infty$) algorithm on random feasible graphs sparsified by $\max(2)$ thinning with the performance of the optimal DB($\alpha$) algorithm on unthinned graphs. 
In order to do so, we numerically obtain the optimal exponent $\alpha^\ast(\meandeg)$ for the values of $\meandeg$ used in Fig.~\ref{fig:05}, for random graphs with Binomial and deterministic out-degree distributions. 
The results are shown in Tables \ref{tab:ER} and \ref{tab:Det}, for Binomial and deterministic out-degrees, respectively, along with the corresponding mean matching fractions. 
The tables also show the performance of the DB($0$) and DB($-\infty$) algorithms for ease of comparison, along with the performance obtained by using $\max(2)$ thinning followed by the DB($-\infty$) algorithm. 

We see that the optimal parameter, $\alpha^\ast$, is non-decreasing in the mean sender degree, $\meandeg$, in both random graph families; as mean degree increases, randomly choosing a neighboring receiver to issue a GRANT becomes progressively preferable to greedily choosing the neighbor of lowest degree. 

We also see from the final column that applying the DB($-\infty$) algorithm to the intention graph obtained by subjecting the feasible graph to $\max(2)$ thinning yields performance that is both robust across mean degrees and that is close to or better than the performance obtained by using DB($\alpha^{\ast}$) on the unthinned graph. This observation leads us to advocate the use of $\max(2)$ thinning followed by DB($-\infty$) matching as a robust algorithm for decentralized matching, with no need for fine-tuning.

\begin{table}[htbp]
\caption{Mean matching fraction under DB$(\alpha)$ for Erd\H{o}s-R\'enyi random graphs}
\label{tab:ER}
\centering
\begin{small}
\begin{tabular}{@{}l ccc c c@{}}
\toprule
& \multicolumn{3}{c}{\textbf{Mean matching DB$(\alpha)$}} & & \textbf{$\max(2) +$} \\
\cmidrule(lr){2-4}
$\meandeg$ & {Uniform} & {Optimal} & {Greedy} & $\alpha^\ast$ & {DB$(-\infty)$} \\
& {$(\alpha = 0)$} & {$(\alpha = \alpha^\ast)$} & {$(\alpha = -\infty)$} & & \\
\midrule
2 & $0.581$ & $0.681$ & $0.681$ & $-\infty$ & $0.678$ \\
3 & $0.618$ & $0.704$ & $0.694$ & $-3.9$ & $0.716$ \\
4 & $0.626$ & $0.695$ & $0.655$ & $-2.4$ & $0.728$ \\
8 & $0.633$ & $0.661$ & $0.455$ & $-1.4$ & $0.731$ \\
\bottomrule
\end{tabular}
\end{small}
\end{table}

\begin{table}[htbp]
\caption{Mean matching fraction under DB$(\alpha)$ for random graphs with deterministic sender degree}
\label{tab:Det}
\centering
\begin{small}
\begin{tabular}{@{}l ccc c c@{}}
\toprule
& \multicolumn{3}{c}{\textbf{Mean matching DB$(\alpha)$}} & & \textbf{$\max(2) +$} \\
\cmidrule(lr){2-4}
$\meandeg$ & {Uniform} & {Optimal} & {Greedy} & $\alpha^\ast$ & {DB$(-\infty)$} \\
& {$(\alpha = 0)$} & {$(\alpha = \alpha^\ast)$} & {$(\alpha = -\infty)$} & & \\
\midrule
2 & $0.633$ & $0.737$ & $0.729$ & $-4.4$ & $0.737$ \\
3 & $0.633$ & $0.722$ & $0.688$ & $-2.7$ & $0.737$ \\
4 & $0.633$ & $0.700$ & $0.625$ & $-2.0$ & $0.737$ \\
8 & $0.633$ & $0.661$ & $0.431$ & $-1.3$ & $0.737$ \\
\bottomrule
\end{tabular}
\end{small}
\end{table}

\section{System simulations for DCNs}
\label{sec:SysSim}

In this section, we evaluate one algorithm from the family proposed above: DB$(-\infty)$ with $\max(2)$ thinning, which we henceforth call the $\twocgs$ (2-choice with greedy selection) algorithm.
This variant offers a favorable balance between matching quality and communication overhead.
We perform packet-level system simulations to evaluate the performance of $\twocgs$ in the DCN context.
The results demonstrate that $\twocgs$ achieves competitive throughput and latency with significantly lower message complexity than existing protocols.

We implement $\twocgs$ within the Yet Another Packet Simulator (YAPS) \cite{NetSysSim} and compare it against two matching-based transport protocols. The first is the one-round version of dcPIM \cite{dcpim_sigcomm2022}, denoted 1r-dcPIM, whose matching algorithm is DB$(0)$ (i.e., uniform selection). The second is iSLIP \cite{islip_tnet1999}, a deterministic round-robin matching algorithm.
Much like PIM, iSLIP operates in three stages: (1) Request, (2) Grant, and (3) Accept.
Senders and receivers maintain their individual fixed round-robin arbitration orders.
In the Grant stage, each receiver selects the highest-priority requesting sender; in the Accept stage, each sender accepts the highest-priority granting receiver.
A receiver's pointer advances only when its grant is accepted; a sender's pointer advances unconditionally.

We consider a two-tier leaf-spine topology consisting of nine Top-of-Rack (ToR) switches, four spine switches, and sixteen hosts per ToR, similar to \cite{dcpim_sigcomm2022}. 
Each host has a $100$~Gbps access link to the ToR, and each ToR is connected to all spine switches with a $400$~Gbps link.
All links have a propagation delay of $200$~ns.
Each switch has a processing latency of $450$~ns and a $500$~kB buffer per output port ($\approx 10 \times$ BDP), where the bandwidth-delay product (BDP) is the maximum amount of data that can be in flight between a sender-receiver pair at any given moment.

Each sender-receiver pair is associated with its own independent Poisson arrival process, which determines the edges of the (evolving) feasible graph. 
At each arrival event, a message size is sampled randomly from a representative workload trace; we evaluate system performance on the following two workload traces: (a) IMC10 \cite{phost_conext15}, and (b) SGD \cite{flowsize_nsdi19}.  
Both are obtained from production networks, and exhibit a heavy-tailed distribution \cite{benson2010network}: \emph{short} messages of size at most one BDP make up the predominant fraction of messages, but the bulk of the traffic volume is due to \emph{long} messages of size greater than one BDP. 
The network load is defined as the ratio of the arrival data rate to the access link data rate. 
We choose arrival rates to generate target network loads within the interval $[0.3, 0.85]$,
representing traffic intensities ranging from light to heavy. 

In all three algorithms, short messages are prioritized and do not require matching before transmission, whereas long messages are transmitted only when the corresponding sender-receiver pairs have been matched, adhering to a receiver-driven grant mechanism for dcPIM \cite{dcpim_sigcomm2022} and $\twocgs$, and a sender-driven grant mechanism for iSLIP \cite{islip_tnet1999}.
Hence, an edge between a sender-receiver pair is added to the feasible graph in our system simulation only when a long message arrives between them.
Control messages are transmitted with the highest priority. 
The distributed matching algorithm is implemented at the hosts, and the leaf-spine switches only forward packets from source to destination. 
The matching and transmission phases are pipelined as in dcPIM \cite{dcpim_sigcomm2022}, to improve utilization and reduce delays. 

The flow completion time (FCT) is the total time taken to transfer a message from the sender to the receiver. 
The optimal FCT is attained when there are no other messages in the system. 

We evaluate our algorithm on the following two metrics: (a) normalized throughput, defined as the ratio of the successful data transfer rate to the access link rate, and (b) normalized FCT, defined as the ratio of the observed FCT to the optimal FCT \cite{dcpim_sigcomm2022}. 
By definition, the successful data transfer rate must be equal to the arrival rate in a stable system. Hence, the stability region is the set of network loads for which the two are equal. 


A matching algorithm that improves the mean matching size increases the number of messages that can be served in each data transmission phase, leading to an enlarged stability region. 
This is illustrated in Figs.~\ref{fig:07} and \ref{fig:08}, where we depict various performance metrics for the $\twocgs$, 1r-dcPIM, and iSLIP algorithms on the IMC10 and SGD workloads, respectively. 
We see from Figs.~\ref{fig:07a} and \ref{fig:08a} that, while all three algorithms achieve the same mean matching fraction at low loads, 1r-dcPIM beats iSLIP at moderate load, while $\twocgs$ beats both at higher loads (on both the IMC10 and SGD workloads). 
Observe from Figs.~\ref{fig:07b} and \ref{fig:08b} for the IMC10 and SGD workloads that the differences in mean matching are reflected in throughput: $\twocgs$ achieves higher normalized throughput than 1r-dcPIM, which in turn beats iSLIP, with the differences becoming more pronounced as the network load increases.
\begin{figure}[ht]
\centering
\subfloat[Mean matching fraction\label{fig:07a}]{\resizebox{.49\linewidth}{!}{\pgfplotsset{every axis ylabel/.append style={font=\large},
	xlabel/.append style={font=\large}}
\pgfplotsset{every tick label/.append style={font=\small}}

\begin{tikzpicture}
\begin{axis}[
xlabel={Network load},
ylabel={Mean matching fraction},
xmajorgrids,
ymajorgrids,
ymin=0.3,
xmin=0.3,
xmax=0.8,
xtick ={0.3, 0.4, 0.5, 0.6, 0.7, 0.8, 0.9},
legend entries={ 1r-dcPIM , $\twocgs$, iSLIP
},
legend style = {legend pos = north west, nodes=right, font=\small},
title = { }
]

\addplot[
color= blue,
line width=1pt,
every mark/.append style={solid},
]
coordinates {
(0.3, 0.335355)
(0.35, 0.391198)
(0.4, 0.447257)
(0.45, 0.503403)
(0.5, 0.559790)
(0.55, 0.612574)
(0.6, 0.630776)
(0.65, 0.632685)
(0.7, 0.633021)
(0.75, 0.633199)
(0.8, 0.633363)
(0.85, 0.633334)
};

  \addplot[
color= red,
line width=1pt,
every mark/.append style={solid},
]
coordinates {
(0.3, 0.335509)
(0.35, 0.392193)
(0.4, 0.449032)
(0.45, 0.505927)
(0.5, 0.562972)
(0.55, 0.619934)
(0.6, 0.675812)
(0.65, 0.715278)
(0.7, 0.729167)
(0.75, 0.729167)
(0.8, 0.730463)
(0.85, 0.730692)
 };

  \addplot[
color= ForestGreen,
line width=1pt,
every mark/.append style={solid},
]
coordinates {
(0.3, 0.335594)
(0.4, 0.447361)
(0.5, 0.523611)
(0.6, 0.522917)
(0.7, 0.522917)
(0.8, 0.526389)
(0.9, 0.530556)
 };

\end{axis}
\end{tikzpicture}}}
\subfloat[Normalized throughput\label{fig:07b}]{\resizebox{0.49\linewidth}{!}{\pgfplotsset{every axis ylabel/.append style={font=\large},
	xlabel/.append style={font=\large}}
\pgfplotsset{every tick label/.append style={font=\small}}
\begin{tikzpicture}
\begin{axis}[
xlabel={Network load},
ylabel={Normalized throughput},
xmajorgrids,
ymajorgrids,
ymin=0.3,
xmin=0.3,
xmax=0.8,
xtick ={0.3, 0.4, 0.5, 0.6, 0.7, 0.8, 0.9},
legend entries={ 1r-dcPIM , $\twocgs$, iSLIP
},
legend style = {legend pos = north west, nodes=right, font=\small},
title = { }
]

\addplot[
color= blue,
line width=1pt,
every mark/.append style={solid},
]
coordinates {
(0.3,0.2998)
(0.35,0.3497)
(0.4,0.3995)
(0.45,0.4491)
(0.5,0.498)
(0.55,0.5408)
(0.6,0.5543)
(0.65,0.5551)
(0.7,0.5542)
(0.75,0.5536)
(0.8,0.5533)
(0.85,0.5527)
};

  \addplot[
color= red,
line width=1pt,
every mark/.append style={solid},
]
coordinates {
(0.3,0.2999)
(0.35,0.3498)
(0.4,0.3996)
(0.45,0.4494)
(0.5,0.4991)
(0.55,0.5483)
(0.6,0.5956)
(0.65,0.63)
(0.7,0.65)
(0.75,0.65)
(0.8,0.65)
(0.85,0.65)
 };

  \addplot[
color= ForestGreen,
line width=1pt,
every mark/.append style={solid},
]
coordinates {
(0.3,0.2998)
(0.4,0.3995)
(0.5,0.5)
(0.6,0.5)
(0.7,0.5)
(0.8,0.5)
(0.9,0.5)
 };

\end{axis}
\end{tikzpicture}}}\\
\subfloat[Short message FCT\label{fig:07c}]{\resizebox{.49\linewidth}{!}{\pgfplotsset{every axis ylabel/.append style={font=\large},
	xlabel/.append style={font=\large}}
\pgfplotsset{every tick label/.append style={font=\small}}

\begin{tikzpicture}
\begin{axis}[
xlabel={Network load},
ylabel={Normalized FCT (mean)},
xmajorgrids,
ymajorgrids,
xmin=0.3,
xmax=0.7,
xtick ={0.3, 0.4, 0.5, 0.6, 0.7, 0.8, 0.9},
legend entries={ 1r-dcPIM, $\twocgs$, iSLIP
},
legend style = {legend pos = north west, nodes=right, font=\small},
title = { }
]

\addplot[
color= blue,
line width=1pt,
every mark/.append style={solid},
]
coordinates {
(0.3,1.0581)
(0.35,1.0713)
(0.4,1.0907)
(0.45,1.1232)
(0.5,1.1942)
(0.55,1.463)
(0.6,2.1126)
(0.63,2.4571)
(0.64,2.5755)
(0.65,2.6613)
(0.7,3.0307)
};

  \addplot[
color= red,
line width=1pt,
every mark/.append style={solid},
]
coordinates {
(0.3,1.0585)
(0.35,1.0689)
(0.4,1.0817)
(0.45,1.0986)
(0.5,1.1236)
(0.55,1.1675)
(0.6,1.2807)
(0.63,1.4462)
(0.64,1.515)
(0.65,1.609)
(0.7,2.0233)
 };

  \addplot[
color= ForestGreen,
line width=1pt,
every mark/.append style={solid},
]
coordinates {
(0.3,1.0618)
(0.4,1.0975)
(0.5,2.4624)
(0.6,4.9928)
(0.7,6.9026)
(0.8,1.1120)
(0.9, 1.1300)
 };

\end{axis}
\end{tikzpicture}}}
\subfloat[Long message FCT\label{fig:07d}]{\resizebox{0.49\linewidth}{!}{\pgfplotsset{every axis ylabel/.append style={font=\large},
	xlabel/.append style={font=\large}}
\pgfplotsset{every tick label/.append style={font=\small}}

\begin{tikzpicture}
\begin{axis}[
xlabel={Network load},
ylabel={Normalized FCT (mean)},
xmajorgrids,
ymajorgrids,
xmin=0.3,
xmax=0.7,
xtick ={0.3, 0.4, 0.5, 0.6, 0.7, 0.8, 0.9},
legend entries={ 1r-dcPIM, $\twocgs$, iSLIP
},
legend style = {legend pos = north west, nodes=right, font=\small},
title = { }
]

\addplot[
color= blue,
line width=1pt,
every mark/.append style={solid},
]
coordinates {
(0.3,2.9985)
(0.35,3.6584)
(0.4,4.7085)
(0.45,6.6228)
(0.5,11.1361)
(0.55,29.2836)
(0.6,76.458)
(0.63,102.4977)
(0.64,110.1728)
(0.65,117.4139)
(0.7,145.7684)
};

  \addplot[
color= red,
line width=1pt,
every mark/.append style={solid},
]
coordinates {
(0.3,2.6779)
(0.35,3.0941)
(0.4,3.6601)
(0.45,4.492)
(0.5,5.8381)
(0.55,8.5052)
(0.6,15.8781)
(0.63,27.2404)
(0.64,32.1605)
(0.65,38.2875)
(0.7,70.6772)
 };

  \addplot[
color= ForestGreen,
line width=1pt,
every mark/.append style={solid},
]
coordinates {
(0.3,2.9567)
(0.4,4.7652)
(0.5,68.1021)
(0.6,173.43)
(0.7,244.58)
(0.8,278)
 };
 
\end{axis}
\end{tikzpicture}}}\\
\caption{Performance comparison of $\twocgs$, 1r-dcPIM, and iSLIP for IMC10 workload in leaf-spine architecture.}
\label{fig:07}
\end{figure}
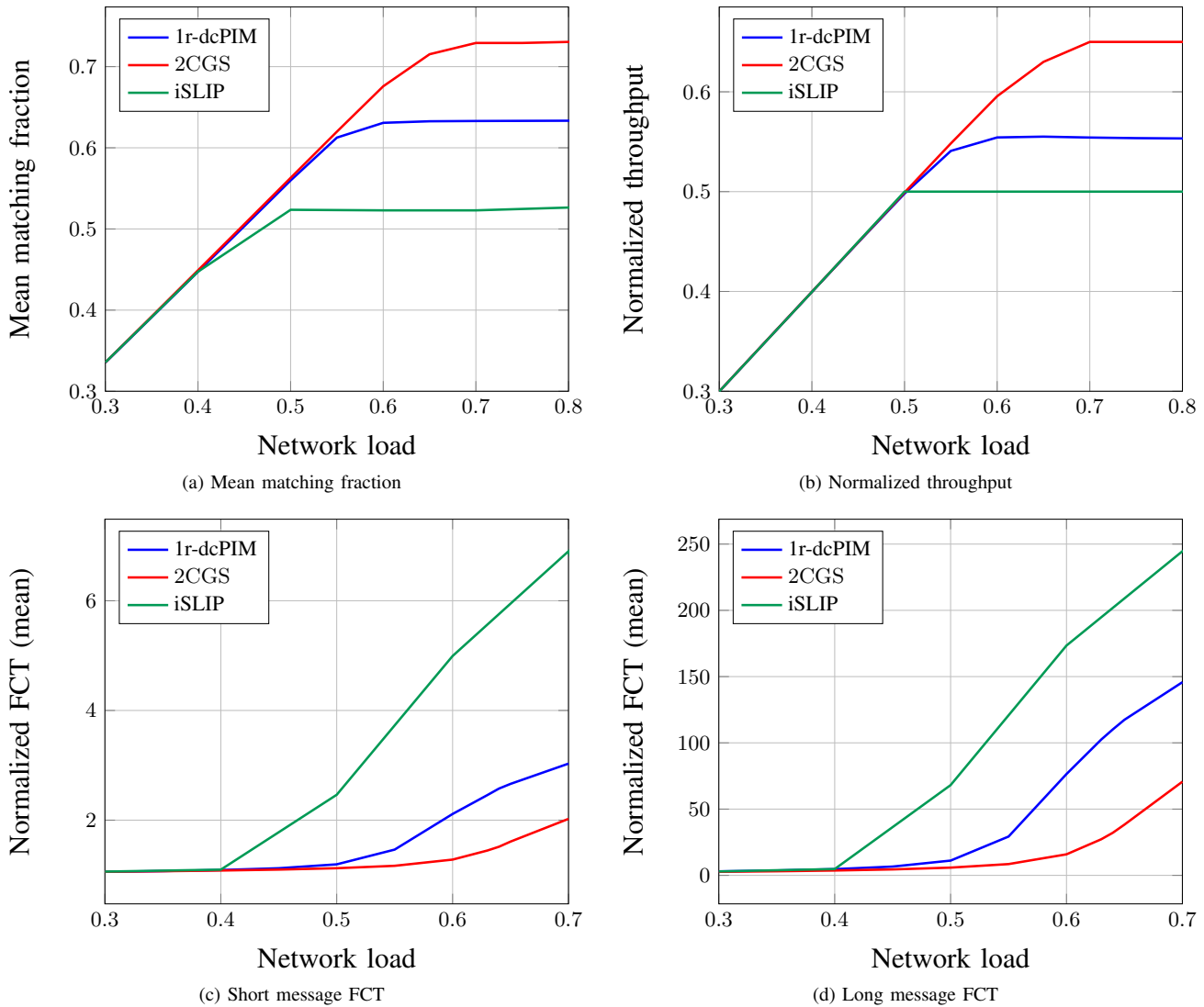
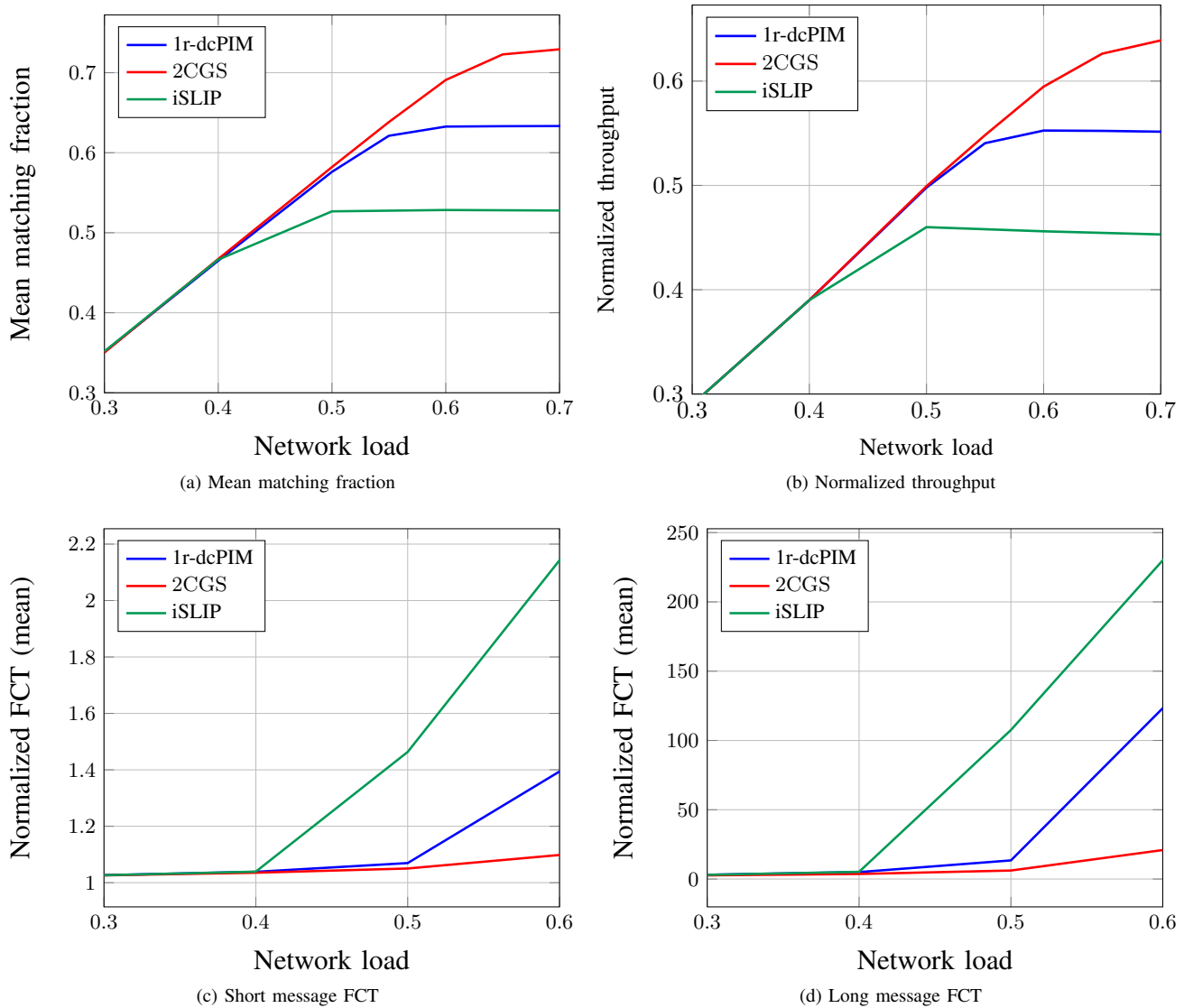
\begin{figure}[ht]
\centering
\subfloat[Mean matching fraction\label{fig:08a}]{\resizebox{0.49\linewidth}{!}{\pgfplotsset{every axis ylabel/.append style={font=\large},
	xlabel/.append style={font=\large}}
\pgfplotsset{every tick label/.append style={font=\small}}

\begin{tikzpicture}
\begin{axis}[
xlabel={Network load},
ylabel={Mean matching fraction},
xmajorgrids,
ymajorgrids,
ymin=0.3,
xmin=0.3,
xmax=0.7,
xtick ={0.3,0.4,0.5, 0.6, 0.7, 0.8, 0.9},
legend entries={ 1r-dcPIM, $\twocgs$, iSLIP
},
legend style = {legend pos = north west, nodes=right, font=\small},
title = { }
]

\addplot[
color= blue,
line width=1pt,
every mark/.append style={solid},
]
coordinates {
(0.3, 0.351111)
(0.4, 0.465000)
(0.5, 0.576104)
(0.55, 0.621160)
(0.6, 0.632663)
(0.65, 0.633154)
(0.7, 0.633334)
(0.75, 0.633315)
(0.8, 0.633411)
(0.85, 0.633223)
(0.9, 0.633581)
(0.95, 0.633230)
};

  \addplot[
color= red,
line width=1pt,
every mark/.append style={solid},
]
coordinates {
(0.3, 0.350278)
(0.4, 0.467153)
(0.5, 0.582090)
(0.55, 0.638323)
(0.6, 0.690916)
(0.65, 0.722863)
(0.7, 0.729235)
(0.75, 0.730785)
(0.8, 0.731149)
(0.85, 0.731291)
(0.9, 0.731388)
(0.95, 0.731451)
 };

  \addplot[
color= ForestGreen,
line width=1pt,
every mark/.append style={solid},
]
coordinates {
(0.3, 0.351806)
(0.4, 0.466389)
(0.5, 0.526736)
(0.6, 0.528403)
(0.7, 0.527778)
(0.8, 0.530556)
(0.85, 0.535347)
 };

\end{axis}
\end{tikzpicture}}}
\subfloat[Normalized throughput\label{fig:08b}]{\resizebox{0.49\linewidth}{!}{\begin{tikzpicture}
\begin{axis}[
    ylabel={Normalized throughput},
    xlabel={Network load},
    xmajorgrids,
    ymajorgrids,
    ymin=0.3,
    xmin=0.3,
    xmax=0.7,
%
    xtick ={0.3, 0.4, 0.5, 0.6, 0.7, 0.8, 0.9},
    legend entries={ 1r-dcPIM, $\twocgs$, iSLIP
	},
    legend style = {legend pos = north west, nodes=right, font=\small},
    title = { }
]

\addplot[
    color=blue,
    line width=1pt,
    every mark/.append style={solid},
] coordinates {
(0.3,0.29)
(0.4,0.39)
(0.50, 0.4979)
(0.55, 0.5405)
(0.60, 0.5526)
(0.65, 0.5523)
(0.70, 0.5515)
(0.75, 0.5509)
(0.80, 0.5506)
(0.85, 0.5502)
(0.90, 0.5503)
(0.95, 0.5499)
};

\addplot[
    color=red,
    line width=1pt,
    every mark/.append style={solid},
    mark indices={0,5,10,15},
] coordinates {
(0.3,0.29)
(0.4,0.39)
(0.50, 0.4992)
(0.55, 0.5482)
(0.60, 0.5947)
(0.65, 0.6263)
(0.70, 0.6390)
(0.75, 0.645)
(0.80, 0.6472)
(0.85, 0.6485)
(0.90, 0.6494)
(0.95, 0.6496)
};

\addplot[
    color=ForestGreen,
    line width=1pt,
    every mark/.append style={solid},
    mark indices={0,5,10,15},
] coordinates {
(0.3,0.29)
(0.4,0.39)
(0.50, 0.46)
(0.60, 0.456)
(0.70, 0.453)
};

\end{axis}
\end{tikzpicture}}}\\
\subfloat[Short message FCT\label{fig:08c}]{\resizebox{0.49\linewidth}{!}{\pgfplotsset{every axis ylabel/.append style={font=\large},
	xlabel/.append style={font=\large}}
\pgfplotsset{every tick label/.append style={font=\small}}

\begin{tikzpicture}
\begin{axis}[
xlabel={Network load},
ylabel={Normalized FCT (mean)},
xmajorgrids,
ymajorgrids,
xmin=0.3,
xmax=0.6,
xtick ={0.3, 0.4, 0.5, 0.6, 0.7, 0.8, 0.9},
legend entries={ 1r-dcPIM, $\twocgs$, iSLIP
},
legend style = {legend pos = north west, nodes=right, font=\small},
title = { }
]

\addplot[
color= blue,
line width=1pt,
every mark/.append style={solid},
]
coordinates {
(0.3,1.0266)
(0.4,1.0385)
(0.5,1.0695)
(0.6,1.3941)
};

  \addplot[
color= red,
line width=1pt,
every mark/.append style={solid},
]
coordinates {
(0.3,1.0259)
(0.4,1.0353)
(0.5,1.0503)
(0.6,1.0982)
 };

  \addplot[
color= ForestGreen,
line width=1pt,
every mark/.append style={solid},
]
coordinates {
(0.3,1.0264)
(0.4,1.0392)
(0.5,1.4634)
(0.6,2.1428)
(0.7,2.7182)
(0.8,3.0386)
 };

\end{axis}
\end{tikzpicture}}}
\subfloat[Long message FCT\label{fig:08d}]{\resizebox{0.49\linewidth}{!}{\pgfplotsset{every axis ylabel/.append style={font=\large},
	xlabel/.append style={font=\large}}
\pgfplotsset{every tick label/.append style={font=\small}}

\begin{tikzpicture}
\begin{axis}[
xlabel={Network load},
ylabel={Normalized FCT (mean)},
xmajorgrids,
ymajorgrids,
xmin=0.3,
xmax=0.6,
xtick ={0.3, 0.4, 0.5, 0.6, 0.7, 0.8, 0.9},
legend entries={ 1r-dcPIM, $\twocgs$, iSLIP
},
legend style = {legend pos = north west, nodes=right, font=\small},
title = { }
]

\addplot[
color= blue,
line width=1pt,
every mark/.append style={solid},
]
coordinates {
(0.3,3.06)
(0.4,5.003)
(0.5,13.44)
(0.6,123.29)
};

  \addplot[
color= red,
line width=1pt,
every mark/.append style={solid},
]
coordinates {
(0.3,2.63)
(0.4,3.64)
(0.5,6.13)
(0.6,20.91)
 };

  \addplot[
color= ForestGreen,
line width=1pt,
every mark/.append style={solid},
]
coordinates {
(0.3,3.0052)
(0.4,5.1825)
(0.5,107.53)
(0.6,230.047)
(0.7,284.27)
 };
 
\end{axis}
\end{tikzpicture}}}
\caption{Performance comparison of $\twocgs$, 1r-dcPIM, and iSLIP for SGD workload in leaf-spine architecture.}
\label{fig:08}
\end{figure}
As noted above, the stability region is precisely the range of loads for which throughput and network load are equal. 
Inspecting Figs. \ref{fig:07b} and \ref{fig:08b} suggests that the stability region extends to loads of around $0.4$--$0.5$ for iSLIP, around $0.5$ for 1r-dcPIM and around $0.6$ for $\twocgs$, but it is hard to pinpoint the precise onset of instability. 
It is easier to discern in Figs. \ref{fig:07d} and \ref{fig:08d}, where we see the flow completion time of long messages starting to increase sharply for loads above $0.4$ for iSLIP, above $0.5$ for 1r-dcPIM, and above around $0.6$ for $\twocgs$. 

In summary, all three algorithms perform well within their respective stability regions, but differ in the maximum load for which they are stable, driven by differences in the sizes of matchings that they obtain. 
It also appears from the figures that the stability region is a characteristic of the algorithm and not sensitive to the nature of the workload (IMC10 or SGD).

Figs. \ref{fig:07c} and \ref{fig:08c} show that even the FCT of short messages suffers from the onset of instability, albeit much less than for long messages (note the different scales for the $y$-axes). 
This is surprising, given that short messages are prioritized and should not see the growing long message queues. 
A possible explanation is a growth in control messages, which have even higher priority, occasioned by instability in the long message queues. 
This is supported by Fig. \ref{fig:09}, which shows the number of control messages for different choices of network loads and matching algorithms.
Indeed, the number of control messages grows sharply beyond the stability region for all three algorithms, lending support to the hypothesis that the resulting congestion delays short messages. 

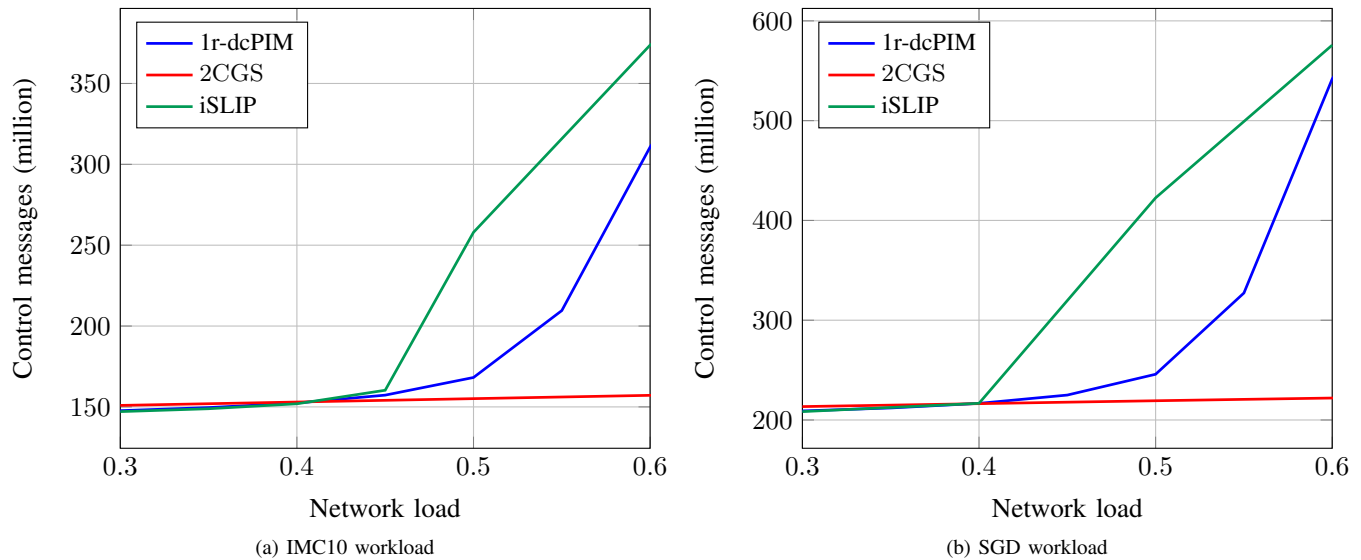
\begin{figure}[ht]
\centering
\subfloat[IMC10 workload\label{fig:09a}]{\resizebox{0.5\linewidth}{!}{\begin{tikzpicture}
\begin{axis}[
    xmin=0.3,
    xmax = 0.6,
    xtick ={0.3, 0.4, 0.5, 0.6},
    grid=both,
    minor grid style={solid,gray!50},
    legend entries={1r-dcPIM, $\twocgs$, iSLIP
	},
    legend style = {legend pos = north west, nodes=right, font=\small},
    ylabel={Control messages (million)},
    xlabel={Network load},
]

\addplot[
color= blue,
line width=1pt,
every mark/.append style={solid},
]coordinates {
(0.3,147.68)
(0.35,149.58)
(0.4,152.41)
(0.45,157.28)
(0.50, 168.19)
(0.55,209.57)
(0.60, 310.71)
(0.65,388.57)
};

\addplot[
color= red,
line width=1pt,
every mark/.append style={solid},
mark indices={0,5,10,15},
] coordinates {
(0.3,150.93)
(0.35,151.97)
(0.4,153.02)
(0.45,154.07)
(0.50, 155.11)
(0.55,156.13)
(0.60, 157.16)
(0.65,158.94)
};

\addplot[
color= ForestGreen,
line width=1pt,
every mark/.append style={solid},
mark indices={0,5,10,15},
] coordinates {
(0.3,147.09)
(0.35,148.87)
(0.4,151.93)
(0.45,160.30)
(0.50, 257.966)
(0.60, 373.73)
(0.65,409.24)
};

\end{axis}
\end{tikzpicture}}}
\subfloat[SGD workload\label{fig:09b}]{\resizebox{0.5\linewidth}{!}{\begin{tikzpicture}
\begin{axis}[
    xmin=0.3,
    xmax = 0.6,
    xtick ={0.3, 0.4, 0.5, 0.6},
    grid=both,
    minor grid style={solid,gray!50},
    legend entries={ 1r-dcPIM, $\twocgs$, iSLIP
	},
    legend style = {legend pos = north west, nodes=right, font=\small},
    ylabel={Control messages (million)},
    xlabel={Network load},
]

\addplot[
color= blue,
line width=1pt,
every mark/.append style={solid},
]coordinates {
(0.3,209.08)
(0.35,212.05)
(0.4,216.64)
(0.45,224.98)
(0.50, 245.83)
(0.55,327.26)
(0.60, 541.84)
(0.65,674.46)
};

\addplot[
color= red,
line width=1pt,
every mark/.append style={solid},
mark indices={0,5,10,15},
] coordinates {
(0.3,213.43)
(0.35,214.89)
(0.4,216.36)
(0.45,217.85)
(0.50, 219.28)
(0.55,220.67)
(0.60, 222)
(0.65,225)
};

\addplot[
color= ForestGreen,
line width=1pt,
every mark/.append style={solid},
mark indices={0,5,10,15},
] coordinates {
(0.3,208.39)
(0.4,216.70)
(0.50, 422.64)
(0.60, 575.73)
};

\end{axis}
\end{tikzpicture}}}
\caption{
Total number of control messages generated by $\twocgs$, 1r-dcPIM, and iSLIP for the IMC10 and SGD workloads under different network loads in a leaf-spine architecture.
}
\label{fig:09}
\end{figure}


\section{Conclusion}
%
The work presented here is motivated by the challenge of scheduling latency-sensitive communications between hosts in large data center networks, where centralized coordination is infeasible due to scale.
We abstracted this scheduling problem as one of decentralized bipartite matching and proposed a family of single-round probabilistic algorithms built on two complementary ideas: random thinning of the communication graph and degree-biased receiver selection using only local information.

On the theoretical side, we established closed-form asymptotic expressions for the mean matching fraction under both uniform selection, DB$(0)$, and greedy selection, DB$(-\infty)$, for $D$-out random graph models.
A notable finding is that the mean matching fraction under uniform selection depends on the out-degree distribution only through the probability that a sender has at least one feasible receiver, making it insensitive to the shape of the degree distribution.
Under greedy selection, targeting minimum-degree receivers yields gains in sparse regimes but suffers from a bottleneck effect as density increases: low-degree receivers attract disproportionately many grants, leading to collisions.
Random thinning resolves this tension by sparsifying dense graphs before matching.

Numerical evaluations on finite systems confirmed the tightness of the theoretical predictions and showed that thinning to a target degree of two, combined with greedy selection (the 2CGS algorithm) delivers robust performance across a wide range of graph models and parameters without requiring any tuning.
System-level packet simulations using production workload traces demonstrated that 2CGS extends the stability region to approximately $60\%$ network load, compared with roughly $50\%$ for one-round dcPIM and $40$--$50\%$ for iSLIP, while incurring comparable control-message overhead within the stable operating regime.
These results make 2CGS a promising candidate for further investigation in hardware deployment.

\bibliographystyle{IEEEtran}
\bibliography{IEEEabrv,tnet-2026}
\end{document}